\newcommand{\ket}[1]{|#1\rangle}
\newcommand{\braket}[2]{\langle #1|#2\rangle}
\newcommand{\Cent}[0]{\mbox{\textcent}}
\newcommand{\dollar}[0]{\$}
\newtheorem{fact}{Fact}
\title{Debates with small transparent quantum verifiers\thanks{A preliminary version appeared as ``A.~Yakary{\i}lmaz, A.~C.~C. Say and H.~G. Demirci, Debates with small   transparent quantum verifiers, {\em Developments in Language Theory - 18th International Conference, {DLT} 2014, Ekaterinburg, Russia, August 26-29, 2014. Proceedings\/},  {\em LNCS 8633}, (Springer, 2014), pp. 327--338.'' \protect\cite{YSD14}. The arXiv number is 1405.1655.}}
\author{Abuzer Yakary{\i}lmaz\inst{1}$^,$\thanks{Yakary{\i}lmaz was partially supported by CAPES with grant 88881.030338/2013-01, ERC Advanced Grant MQC, and B\"UVAK.}, A. C. Cem Say\inst{2}, H. G{\"{o}}kalp Demirci\inst{3}}
\institute{
National Laboratory for Scientific Computing, Petr\'{o}polis, RJ, 25651-075, Brazil
\and 
Bo\u{g}azi\c{c}i University, Department of Computer Engineering, Bebek 34342 \.{I}stanbul, Turkey
\and
University of Chicago, Department of Computer Science, Chicago, IL 60637 USA
\email{abuzer@lncc.br, say@boun.edu.tr, demirci@cs.uchicago.edu}
}
\authorrunning{Demirci, Say, and Yakary{\i}lmaz} 
\begin{document}

\maketitle

\begin{abstract}
We study a model where two opposing provers debate over the membership status of a given string in a language, trying to convince a weak verifier whose coins are visible to all. We show that the incorporation of just two qubits to an otherwise classical constant-space verifier raises the class of debatable languages from at most $\mathsf{NP}$ to the collection of all Turing-decidable languages (recursive languages). When the verifier is further constrained to make the correct decision with probability 1, the corresponding class goes up from the regular languages up to at least $\mathsf{E}$. We also show that the quantum model outperforms its classical counterpart when restricted to run in polynomial time, and demonstrate some non-context-free languages which have such short debates with quantum verifiers.
\\ ~~ \\
\textbf{Keywords:} \textit{quantum finite automata, quantum computing, probabilistic finite automata, Arthur-Merlin games, debate systems, zero-error}
\end{abstract}

\section{Introduction}
It is well known that the model of alternating computation is equivalent to a setup where two opposing debaters (a prover and a refuter) try to convince a resource-bounded deterministic verifier about whether a given input string is in the language under consideration or not \cite{CKS81}. Variants of this model where the verifier is probabilistic, and the communications between the debaters   are restricted in several different ways, have been studied \cite{CFLS93,FK97,DSY15}. Quantum refereed games, where the messages exchanged between the debaters are quantum states, were examined by Gutoski and Watrous \cite{GW07}.

Most of the work cited above model the verifier as opaque, in the sense that the outcomes of its coin throws are not visible to the debaters, who have a correspondingly incomplete picture of its internal state during the debate. These models can therefore be classified as generalizations of private-coin interactive proof systems \cite{GMR89} to the competing multiple provers case. In this paper, we focus on models where all of the verifier's coins, as well as all communications, are publicly visible to all parties, making them generalizations of Arthur-Merlin games \cite{Bab85}. A recent result \cite{Yak12E} established that a very small quantum component is sufficient to expand the computational power of classical proof systems of this kind considerably, by studying a setup where an otherwise classical constant-space verifier is augmented by a quantum register of just two qubits.  We modify that protocol to show that the addition of a two-qubit quantum register to the classical finite state verifier raises the class of debatable languages from at most $\mathsf{NP}$ to that of all Turing-decidable languages. We also study the case where the verifier is required to take the correct decision with probability 1. We show that small quantum verifiers outperform their classical counterparts in this respect as well, exhibiting an increase from the class of regular languages to at least $\mathsf{E} = \mathsf{DTIME(2^{O(n)})}$. (Note that $\mathsf{E}$ is a proper subset of $\mathsf{EXP} = \mathsf{DTIME(2^{poly(n)})}$.) When we allow ourselves to use a richer set of transition amplitudes, similar results can also be shown for polynomial-time debates, and we demonstrate several non-context-free unary languages which have such fast quantum verifiers.

The rest of this paper is structured as follows: Section \ref{section:prel} describes our model and reviews previous work. Our result on the computational power of the model with a two-qubit constant-space verifier in the two-sided bounded error case is presented in Section \ref{section:decidable}. Section \ref{section:zeroerror} contains an examination of the more restricted zero-error case. We establish the superiority of polynomial-time quantum verifiers over their classical counterparts, and give several tally languages handled by such machines, in Section \ref{section:poly}. Section \ref{section:conc} concludes the paper with some remarks on the possible usage of multihead automata as verifiers, and an open question.

\section{Preliminaries} \label{section:prel}
Consider an interactive system consisting of three actors: two debaters, namely a Prover named Player 1 (P1)  and a Refuter named Player 0 (P0), respectively, and a computational agent called the Verifier (V). All actors have access to a common input string $w$. P1 tries to convince V that $w$ is a member of the language $L$ under consideration, whereas P0 wants to make V reject $w$ as a non-member. The debaters communicate with each other and the verifier through a communication cell which is seen by every actor. V  takes this communication and the outcomes of its coin into account for each step of its computation. The debate continues in this way until the computation of V is terminated as it reaches a decision. We assume that both debaters also see the coin outcomes of V as they occur, and thereby have complete information about the state of the verifier at any point.

In such a setup, we say that language $L$ \textit{has debates checkable by a machine V with error bound} $\epsilon \in [0,\frac{1}{2})$ if 
\begin{itemize}
\item for each $w \in L$, P1 is able to make V accept $w$ with probability at least $1-\epsilon$, no matter what P0 says in return,
\item for each $w \notin L$, P0 is able to make V reject $w$ with probability at least $1-\epsilon$, no matter what P1 says in return.
\end{itemize}
A language is said to be \textit{debatable} if it has debates checkable by some verifier. Note that the class of debatable languages is closed under complementation.

We focus on verifiers which are only allowed to operate under constant space bounds. When V is set to be a deterministic two-way finite automaton, the system described above is equivalent to an alternating two-way finite automaton, and the class of debatable languages coincides with the regular languages \cite{LLS78}. When one replaces V with a two-way probabilistic finite automaton, one obtains a setup equivalent to the alternating probabilistic finite automata (2apfa's) defined in \cite{CHPW98}, and the issue of whether irrational numbers are  allowed as transition probabilities becomes significant. In this paper, we start with both classical and quantum models defined in the most general (reasonable\footnote{When uncomputable real amplitudes are allowed, every language has debates (in which the verifier listens to only one of the players) that are checkable in double exponential time by constant-space quantum verifiers  \cite{SY14B}.}) way by allowing computable real transition probabilities and amplitudes, and we make sure to be fair by applying any restrictions in this regard simultaneously to both versions while comparing them under different conditions. 
\begin{figure}[!th]
	\centering
	\fbox{
	\includegraphics[scale=0.4]{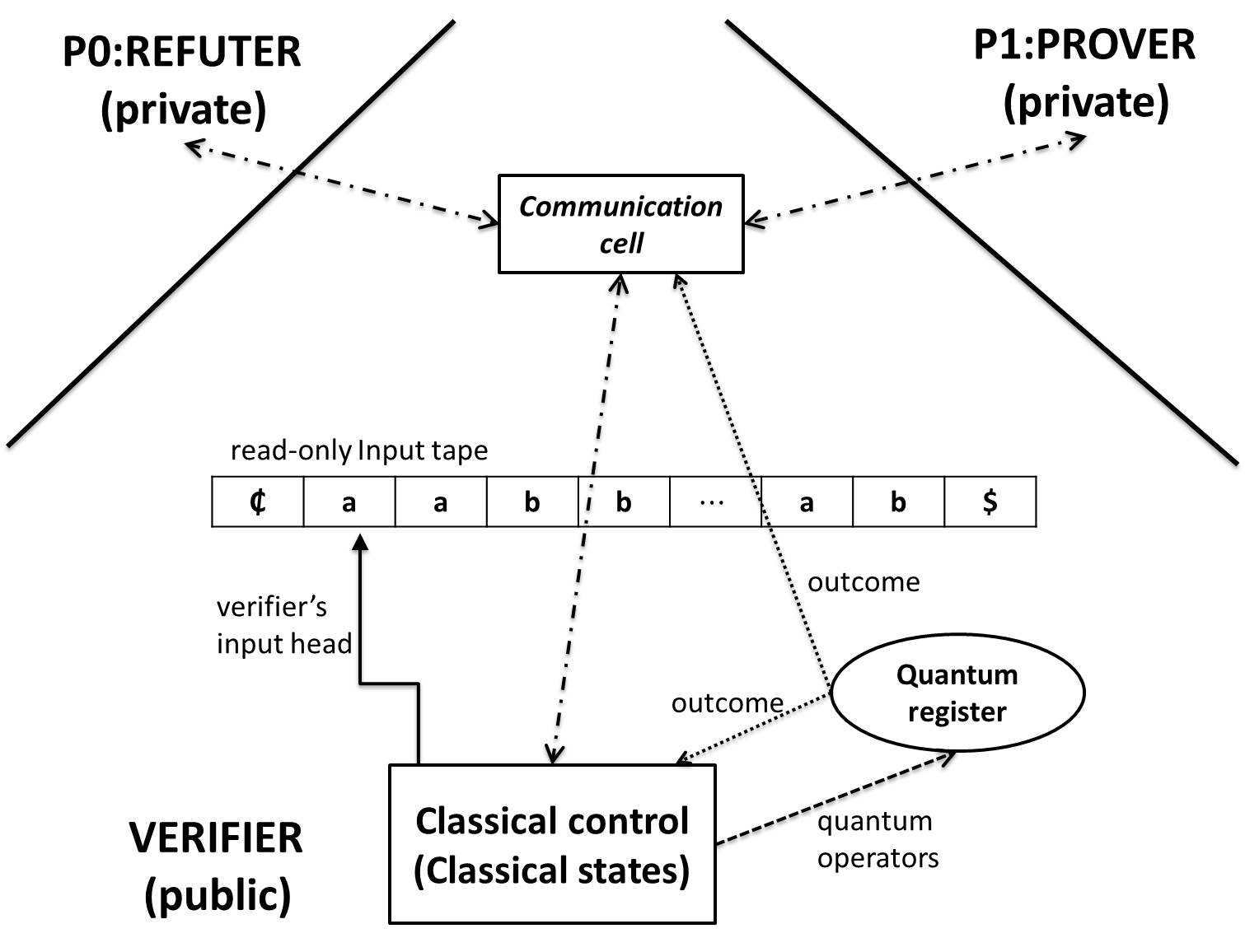}
	}
	\caption{The details of our debate system}
	\label{fig:debater}
\end{figure}
The public-coin quantum verifier model (see Figure \ref{fig:debater}) that we will use is the two-way finite automaton with quantum and classical states (2qcfa) \cite{AW02}, in which the quantum and classical memories are nicely separated, allowing a precise quantification of the amount of ``quantumness" required for our task:\footnote{The 2qcfa definition we present is a ``modernized" version of the original model in \cite{AW02}, generalizing and simplifying the quantum transition setup by using superoperators; see \cite{YS11A}. The two versions are equal in computational power.}

 Formally, a 2qcfa verifier $V$ is an $ 8 $-tuple
\[
	V = (Q,S,\Sigma,\Gamma,\delta,q_1,s_1,s_a,s_r),
\]
where $ Q $ is the set of quantum states, $ S $ is the set of classical states, $ \Sigma $ is the input alphabet, $\Gamma$ is the communication alphabet, $ \delta  $ is the set of transition functions to be described below, $ q_1 \in Q $ is the initial quantum state, $s_1 \in S$ is the  initial classical state, and $ s_a \in S $ and $ s_r \in S $ are respectively the  accepting and rejecting states, such that $ s_a \neq s_r$.  Any given input $ w \in \Sigma^* $ is placed on a read-only single-head input tape between the left- and right-endmarkers ($\Cent$ and $\dollar$, respectively). Let $ \tilde{\Sigma} = \Sigma \cup \{  \Cent,\dollar\} $. 

At the beginning of the computation, the classical and quantum parts of $ V $ are in states $ s_1 $ and $ q_1 $, respectively, and the head is placed on the left end-marker. Computation halts and the input is accepted (rejected) when $ V $ enters $s_a$ ($s_r$). The set of non-halting classical states is subdivided to the sets of reading states ($ S_r $) and communication states ($S_c$), i.e. $ S = S_r \cup S_c \cup \{ s_a,s_r \} $. In each step, the automaton either communicates with the debaters or makes local updates, each such local update consisting of a quantum transition, followed by  a classical one. Therefore, the ``program" $\delta$ is a set whose elements are the three transition functions $ \delta_c $, $ \delta_q $, and $ \delta_s $ that are respectively responsible for the classical communications, local quantum transitions, and local classical transitions, as follows.


When $V$ is in a communication state, its next move is determined by the function $\delta_c$. For $s_c \in S_c$, $ \sigma \in \tilde{\Sigma} $, $ s_r \in S_r $, and $\gamma \in \Gamma$, the function value $ \delta_c(s_c,\sigma) = (s_r,\gamma) $ means that $ V $ will  write the symbol $\gamma$ to the communication cell and switch to the  reading state $s_r$ upon scanning the symbol $\sigma$ on the input tape when originally in communication state $s_c$. After $V$ makes its transmission in this manner, the prover and the refuter emit their responses, say, the symbols $ \gamma_p \in \Gamma $ and $ \gamma_r \in \Gamma $, respectively,  by writing them to their slots in the communication cell. 

Upon entering a reading state, $V$ performs two transitions, the first dictated by the function $\delta_q$, and the second by $\delta_s$. 
For $s_r \in S_r$, $ \sigma \in \tilde{\Sigma} $, and $ \gamma_p, \gamma_r \in \Gamma $, the function value $ \delta_q(s_r,\sigma,\gamma_p,\gamma_r) = \mathcal{E}$ directs $ V $ to apply the superoperator (Figure \ref{fig:superoperators}) $\mathcal{E}$ to  its quantum register if it scans the symbol $\sigma$ on the input tape and the communication symbols $ \gamma_p$ and $ \gamma_r$ in the slots of P1 and P0 in the communication cell when it is in reading state $s_r$. This results in a measurement outcome $i$ to be produced and sent to all actors automatically. The function $\delta_s$ is then used to determine the next classical state and head position: For $ d \in \{ left, stay\mbox{-}put,right \} $, the function value $ \delta_s(s_r,\sigma,\gamma_p,\gamma_r,i) = (s,d)$ causes $V$ to switch to state $ s \in S $, and move the input head in direction $d$.


\begin{figure}[!ht]
	\centering
	\small
	\fbox{
	\begin{minipage}{0.97\textwidth}
		For a 2qcfa with $j$ quantum states, each superoperator $ \mathcal{E} $ is composed of a finite number of $j\times j$ matrices called \textit{operation elements},
		$ \mathcal{E} = \{ E_{1}, \ldots, E_{k} \} $, satisfying
		\begin{equation}
		\label{eq:completeness}
			\sum_{i=1}^{k} E_{i}^{\dagger} E_{i} = I,
		\end{equation}
		where $ k \in \mathbb{Z}^{+} $, and the indices are the measurement outcomes.
		When a superoperator $\mathcal{E}$ is applied to 
		a quantum register in state $\ket{\psi}$, then
		we obtain the measurement outcome $i$ with probability 
		$ p_{i} = \braket{\widetilde{\psi_{i}}}{\widetilde{\psi_{i}}} $,
		where $\ket{\widetilde{\psi_{i}}}$ 
		is calculated as $ \ket{\widetilde{\psi}_{i}} = E_{i} \ket{\psi} $, and $1 \leq i \leq k$.
		If the outcome $i$ is observed ($p_{i} > 0 $), the new state of the system 
		is obtained by normalizing $ \ket{\widetilde{\psi}_{i}} $
		which is $ \ket{\psi_{i}} = \frac{\ket{\widetilde{\psi_{i}}}}{\sqrt{p_{i}}} $.
		Moreover, the quantum register can be set to a predefined quantum state by an initialize operator with a single outcome. 
	\end{minipage}
	}
	\label{fig:superoperators}
	\caption{Superoperators (adapted from \protect\cite{Yak12E})}
\end{figure}


One obtains the definition of the quantum Arthur-Merlin (qAM) systems of \cite{Yak12E} when one removes P0 from the picture described above.  Some of our results  on  small quantum verifiers for debates are based on the following result: 
\begin{fact}
\label{fact:qAMrecognizable}
For any error bound $\epsilon>0$, every Turing-recognizable language (recursively enumerable language) has an Arthur-Merlin system where the verifier uses just two quantum bits, (i.e. four quantum states,) all transition amplitudes are rational numbers, members of the language are accepted with probability 1,   nonmembers are accepted with a probability not greater than $\epsilon$, and a dishonest P1 can cause the machine to run forever without reaching a decision.
\end{fact}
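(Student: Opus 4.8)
The plan is to have the prover P1 supply an encoding of the accepting computation of a Turing machine $M$ recognizing $L$, and to have the two-qubit verifier $V$ check, with P1's help, that this encoding is legitimate. Fix a deterministic single-tape $M$ with $L(M)=L$, normalized so that it accepts only by entering a distinguished halting state from a unique configuration. The honest P1 is expected to transmit, through the communication cell, a string of the form
\[
\#\,C_0\,\#\,C_1\,\#\,C_2\,\#\cdots\#\,C_t\,\#,
\]
where $C_0$ is the initial configuration of $M$ on the common input $w$, each $C_{i+1}$ is the successor of $C_i$ under $M$'s transition function, and $C_t$ is accepting. $V$ must verify three things: that $C_0$ really is $q_0 w$ (suitably padded), that every consecutive pair $(C_i,C_{i+1})$ is consistent with one step of $M$, and that $C_t$ is accepting. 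The last condition is a finite-state check. The first two decompose into (i) purely local, finite-state checkable conditions — the head moves by one cell, the $O(1)$ cells in the head window are rewritten according to the transition rule, the delimiters appear where expected — and (ii) \emph{equality tests between unboundedly long strings that appear far apart in P1's one-way message} (for instance, that a configuration retransmitted later agrees with its earlier copy, or that the tape contents away from the head agree between two consecutive configurations). Part (ii) is precisely the task that a constant-space \emph{classical} verifier provably cannot perform, and it is where the quantum register earns its keep.

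Next I would build the equality-testing subroutine on the two qubits using rational amplitudes, as a reversible quantum fingerprint: while reading the first string $V$ applies a sequence of rational operations that drive the register away from its start state in a way determined by the symbols read, and while reading the second string it applies the corresponding inverse operations. If the two strings are equal the register returns exactly to its initial state, so a computational-basis measurement reports ``consistent'' with probability $1$; if they differ the register lands in a state with nonzero weight on a ``reject'' subspace, so the measurement reports ``inconsistent'' with positive probability. Because on an \emph{honest} transmission every equality really holds, every such measurement is deterministic, so $V$ reaches $C_t$, sees it is accepting, and accepts with probability $1$ — giving perfect completeness. Against a cheating P1 the verdict is amplified by having $V$ request the whole computation again and repeat all checks, rejecting the instant any check fails definitively; here we exploit that $V$ is under no time bound, so a P1 who transmits an infinite, never-accepting stream simply makes $V$ run forever without accepting, which is permissible. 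Choosing the number of repetitions and the fingerprint parameters as a function of the target $\epsilon$ then pushes the probability that $V$ is ever fooled into accepting some $w\notin L$ below $\epsilon$. The machine uses four quantum states, and since only rational rotations are used, all amplitudes are rational.

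The main obstacle is exactly the equality gadget: with only \emph{rational} amplitudes one cannot rotate by an angle whose iterates stay a fixed distance away from multiples of $2\pi$, so a naive ``fingerprint then un-fingerprint'' round detects a discrepancy only with a probability that an adversarial P1 can drive toward $0$ by choosing which forgery to commit to. Obtaining a \emph{uniform} lower bound on the per-round detection probability — independent of the forgery — is the crux, and it forces a careful design of the fingerprint encoding together with the repetition/amplification structure, leaning on perfect completeness and on $V$'s licence to loop forever on a stalling prover; this is the content we import from \cite{Yak12E}. A secondary, more routine difficulty is the encoding bookkeeping: arranging P1's message (for example by interleaving adjacent configurations, tagging cells with positions, or sending each configuration in a redundant form) so that ``$C_{i+1}$ is the successor of $C_i$'' and ``$C_0=q_0w$'' really do reduce to a constant number of long-string equality tests plus finite-state checks, and verifying that the cost of these reductions reintroduces no error on honest inputs.
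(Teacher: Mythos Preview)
Your high-level architecture matches the paper's: P1 streams the computation history of a recognizer $T$ for $L$, and $V$ checks the initial configuration against the input, checks each transition, and parallels the final decision. The divergence is in the quantum subroutine and in the amplification, and that divergence is exactly the part you flag as ``the crux'' and then outsource.

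The equality gadget you describe --- apply rational operations while reading the first string and their inverses while reading the second, then measure --- is essentially the Ambainis--Watrous rotation trick, and as you yourself note, with rational amplitudes it gives no uniform lower bound on the per-round detection probability. You then say the fix is ``the content we import from \cite{Yak12E}'' without saying what that content is. But that content \emph{is} the proof, and it is not a rotate/un-rotate fingerprint at all. In the paper's sketch, $V$ encodes $next(c_i)$ and $c_{i+1}$ as base-$m$ integers stored in \emph{separate} amplitudes of the four-state register (the first amplitude is kept at a constant to drive the affine updates), so that after reading $c_{i+1}$ the unnormalized state is $\alpha\,(1,\ e(next(c_i)),\ e(c_{i+1}),\ e(next(c_{i+1})))^{\mathsf T}$. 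At the block boundary an operation element subtracts the second and third amplitudes and rejects with probability $\alpha^2(e(next(c_i))-e(c_{i+1}))^2$. Because both encodings end in the same blank symbol, any mismatch forces $|e(next(c_i))-e(c_{i+1})|\ge m$, so the rejection probability at a faulty transition is at least $m^2$ times the acceptance probability available at the end of that same spurious history. This is precisely the uniform ratio you were looking for, and it does not come from a reversible fingerprint.

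Your amplification scheme is also off. You propose to ``request the whole computation again'' a number of times chosen as a function of $\epsilon$; but a finite-state $V$ cannot count repetitions, and with a non-uniform per-round detection probability no fixed count would suffice anyway. The actual mechanism is that every superoperator carries auxiliary operation elements whose outcome means ``restart''; in each round $V$ halts with some tiny probability and restarts otherwise, and the overall acceptance probability equals the ratio of per-round acceptance to per-round halting probability. The error bound is then $p_1/(p_1+p_2)\le 1/(m^2+1)$, tuned below any target $\epsilon$ by choosing the encoding base $m$, not by repetition. Perfect completeness holds because on the honest history the subtraction yields exactly zero, so rejection never occurs; nontermination is possible only because a dishonest P1 can send an unbounded ``configuration'' and thereby keep $V$ from ever reaching a comparison step.
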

\begin{proof}
We outline the basic idea, referring the reader to \cite{Yak12E} for a detailed exposition of this proof, which also shows how the verifier in question can be implemented with only rational number entries in its quantum operators.  Let $T$ be the single-tape Turing machine recognizing the language $L$ under consideration. For any input string $w$, P1 (the only debater in this restricted scenario) is supposed to send the computation history (i.e. the sequence of configurations) of $T$ on input $w$ to the verifier V. Some of the possible outcomes of V's observations of its quantum register will be interpreted as  ``restart" commands to P1. At any point, V may interrupt P1 and ask it to restart sending the computation history from the beginning in this manner. (In fact, the verifier is highly likely to require a restart at each step.) 

Whenever the verifier catches P1 lying (i.e. giving an incorrect configuration description), it rejects the input. If the verifier reads a computation history sent by P1 all the way to its completion by a halting configuration without detecting an incorrect configuration,   it halts with a decision paralleling the one described in that history with a certain non-zero probability, and requests a restart with the remaining probability.

A classical public-coin finite automaton faced with this task would not be able to compare two consecutive configuration descriptions $c_i$ and $c_{i+1}$ (which may be very long).\footnote{The associated complexity class for machines with rational transition amplitudes is known \cite{CL88} to be included in $\mathsf{P}$. When the verifier is allowed to hide its coins, its power increases \cite{CL89}.} A 2qcfa verifier handles this problem by encoding the substrings in question  into the amplitudes of  its quantum states.\footnote{Actually, this encoding can also be performed by a classical probabilistic machine \cite{Rab63}. It is the subsequent subtraction that is impossible for classical automata.}  Let $next(c)$ denote the description of the configuration that is 
the legitimate successor, according to the transition function of $T$, of configuration $c$, and let $e(x)$ denote an integer that encodes string $x$ according to a technique to be described later. After 
 the description  $c_{i+1}$ has been read, the amplitudes of the  quantum states of V form the vector
$
	\alpha\left(1~~ e(next(c_{i})) ~~ e(c_{i+1}) ~~ e(next(c_{i+1})) \right)^\mathsf{T},
$
where $\alpha$ is a small rational number.
(The amplitude of the first state is used as an auxiliary value during the encoding \cite{Yak12E}, as will also be seen in the next section.)

 When P1 concludes the presentation of $c_{i+1}$, V executes a move that has the effect of subtracting $\alpha e(next(c_{i}))$ from  $\alpha e(c_{i+1})$,  rejecting with a probability equal to the square of the difference, continuing with some little probability after placing the encoding of $next(c_{i+1})$  into the second state's amplitude and resetting the third and fourth amplitudes to zero for beginning the next encode-compare stage, and requesting a restart with the remaining probability. If $c_{i+1}$'s description is indeed equal to the valid successor of $c_{i}$, the subtraction mentioned above yields zero probability of rejection. Otherwise, the rejection probability arising from a transition error within a computation history is guaranteed to be a big multiple of the acceptance probability that may arise due to that spurious history ending with an accepting configuration.  

If $w \in L$, P1 need only obey the protocol, sending the accepting computation history, restarting each time V tells it to do so. In each try, P1 has a small but nonzero probability of sending the full history without being interrupted, leading to a nonzero probability of halting with acceptance. Since V will detect no transition errors between configurations, the probability of rejection is zero.

If $w \notin L$, any attempt of P1 to trick V to accept $w$ with high probability by sneaking a transition error to the history and ending it with an accepting configuration will be foiled, since the rejection probability associated with the defect in the history can be guaranteed to be as big a multiple of the final acceptance probability as one desires. There is, however, one annoyance that P1 can cause V in this case: If P1 sends an infinite-length ``configuration description" at any point\footnote{Except at the beginning, since V can check the first configuration itself by matching it with the input.} during its presentation, V will never reach the point where it compares the two amplitudes it uses for encoding, and it will therefore fail to halt.
\end{proof}

\section{Small transparent verifiers for all decidable languages}\label{section:decidable}
Our first result is a generalization of the proof of Fact \ref{fact:qAMrecognizable} to the  setup with two debaters described in the previous section.\footnote{In separate work, the techniques of \cite{Yak12E} were used to define a model called q-alternation \cite{Yak13A}. This model is distinct from debate checking in the same sense that the two equivalent definitions of classical nondeterminism (the ``probabilistic machine with zero cut-point"   and the ``verifier-certificate" views) lead to quantum counterparts (\cite{ADH97} and \cite{KSV02}, respectively) which are remarkably different from each other.}

In this section, all entries of the quantum operators of the machines to be described are rational numbers, meaning that the probabilities of the outcomes are always rational.\footnote{The classical probabilistic finite automata, to which we compare our quantum model, can only flip fair coins. It is known that this is sufficient for two-way automata to realize any rational transition probability.} With rational transition probabilities, the class corresponding to the classical counterpart of this verifier model becomes one that should be denoted  $\mathsf{\forall BC\mbox{-}SPACE}(1)$ in the terminology of \cite{Co89}, and is known to contain some nonregular languages \cite{DS92}, and to be contained in  $\mathsf{NP}$. We will show that the addition of a small amount of quantum memory to the probabilistic model increases the power hugely, all the way to the class of decidable languages.
\begin{theorem}\label{theorem:decidable}
For every error bound $\epsilon>0$, every Turing-decidable language has debates checkable by  a 2qcfa with four quantum states, only rational entries in its quantum operators, and with error bounded by $\epsilon$.
\end{theorem}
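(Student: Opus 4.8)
The plan is to graft the second debater onto the computation-history-checking technique behind Fact~\ref{fact:qAMrecognizable}, using the fact that a Turing-decidable $L$ is decided by some \emph{halting} machine $M$: on any input $w$, $M$ has a \emph{unique finite} computation history $H(w)$, and it ends in an accepting configuration exactly when $w\in L$. The core building block is a single ``attempt'' in which V asks one designated debater to transmit a claimed computation history of $M$ on $w$, configuration by configuration, and runs the four-quantum-state encode-and-subtract machinery of \cite{Yak12E} on it precisely as in the proof of Fact~\ref{fact:qAMrecognizable}: V checks the first configuration against the input directly, and for each pair of consecutive configurations it encodes them into its amplitudes, subtracts the encoding of the legitimate successor, and terminates the attempt in rejection with probability equal to the square of the difference (zero for an honest transition), requesting an internal restart otherwise. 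If V ever reads a full valid history through to a halting configuration, it halts with the verdict recorded there. Because $M$ is deterministic, a prover who wants to steer V toward the wrong halting configuration can only insert a transition error --- which, by the scaling argument of \cite{Yak12E}, V catches with probability an arbitrarily large multiple of the probability that the spurious history eventually reaches a wrong verdict --- or transmit an endless ``configuration'' and stall.

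Next I would assemble the debate protocol from a perpetual sequence of such attempts. At the start of each attempt V flips a fair coin (public, as the model demands): on one outcome it runs the attempt with P1 presenting, and catching P1 in a transition error makes V \emph{reject} (P1 failed to justify $w\in L$); on the other outcome it runs the attempt with P0 presenting, and catching P0 in a transition error makes V \emph{accept}. The one ingredient genuinely absent from Fact~\ref{fact:qAMrecognizable} is a safeguard against stalling: at \emph{every} step inside an attempt, V independently, with some small fixed rational probability $\rho$, abandons the attempt and begins a fresh one (with a new coin flip). This makes every attempt terminate after a geometrically distributed number of steps regardless of what the provers do, so a stalling prover merely forfeits attempts rather than freezing V, and each debater receives infinitely many independent attempts.

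To finish, I would check the two error conditions. If $w\in L$, P1 plays honestly and transmits $H(w)$; in any P1-attempt there is a fixed positive probability that V avoids every internal restart and every timeout and reads $H(w)$ to its accepting end, so V accepts, while V can never reject (the sole complete valid history is accepting and honest P1 commits no error), giving acceptance probability $1$. If $w\notin L$, P0 plays honestly and transmits the rejecting $H(w)$; each P0-attempt rejects with a fixed positive probability and never makes V accept, whereas in any P1-attempt the probability that V accepts is at most $\epsilon'$ times the probability that V rejects in that same attempt (the scaling of \cite{Yak12E}, with $\epsilon'$ a parameter we fix in advance). Summing over the attempts --- whose combined probability of never halting is $0$ because every attempt has a uniformly positive chance of a verdict --- yields $\Pr[\text{V accepts}]\le \epsilon'/(1+\epsilon')$, which is below any desired $\epsilon$ once $\epsilon'$ is small enough. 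The verifier uses only the four quantum states and rational amplitudes of Fact~\ref{fact:qAMrecognizable}; the coin flips, the role bookkeeping, and the timeout cost only classical states. I expect the stalling issue to be the crux: an eternally stalling prover is harmless in the one-prover world of Fact~\ref{fact:qAMrecognizable}, but here a stalling P1 must not be able to deny the honest P0 its chance to force rejection when $w\notin L$, and the randomized per-step timeout is exactly what resolves this, at the price of an astronomically large (but finite) expected running time.
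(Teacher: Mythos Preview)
Your approach is genuinely different from the paper's and, with one correction, works. The paper has both debaters transmit the history \emph{simultaneously}; the verifier encodes the common stream until the first symbol where they disagree, and only then flips a coin to decide which debater's continuation to keep tracing. Stalling by the chosen debater is absorbed by the ordinary auxiliary-element restarts already present in the encode-and-subtract machinery, so no separate timeout parameter is needed. Your coin-at-the-top design is simpler to describe and avoids the bookkeeping of locating the disagreement point, at the price of the extra per-step abandonment probability~$\rho$; both routes yield the same four-state, rational-amplitude verifier.

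There is, however, a real gap in your analysis of the $w\in L$ case. You argue that V can never reject because honest P1 commits no transition error, and conclude ``acceptance probability $1$''. But half of the attempts are P0-attempts, and there P0 is adversarial: nothing stops P0 from sending a history that contains a transition error and ends in a rejecting configuration, and in such an attempt V may fail to catch the error and instead reach the rejecting end, producing a rejection. So the acceptance probability is strictly less than $1$. The fix is simply to apply your own scaling argument symmetrically: in any P0-attempt where P0 cheats, the probability that V catches the error (and hence accepts) dominates the probability that V reaches the false rejecting end by the same factor you invoke in the $w\notin L$ analysis, giving overall rejection probability at most $1/(m^{2}+1)$. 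This is exactly how the paper bounds the error, and is why the theorem promises two-sided bounded error rather than one-sided error. A smaller slip: it is not true that ``every attempt has a uniformly positive chance of a verdict'' --- an attempt in which the chosen debater stalls forever yields a verdict with probability zero --- but your conclusion that V halts almost surely still holds, since the honest debater's attempts (which occur with probability $\tfrac12$ each round) do carry a uniformly positive chance of halting.
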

\begin{proof}
We modify the verifier V described in the proof of Fact \ref{fact:qAMrecognizable} in Section \ref{section:prel} to obtain a new verifier V$_1$ as follows: V$_1$ listens to both P0 and P1 in parallel. In the protocol imposed by V$_1$, both debaters are expected to behave exactly as P1 was supposed to behave in that earlier proof; transmitting the computation history of the single-tape Turing machine $T$ for language $L$ on input string $w$, interrupting and restarting transmissions whenever V$_1$ observes an outcome associated with the ``restart" action in its quantum register.

The strategy of V$_1$ is based on the fact that the two debaters are bound to disagree at some point about the computation history of $T$ on $w$. As long as the same description is coming in from both debaters, V$_1$ uses  the same technique mentioned in the proof of Fact \ref{fact:qAMrecognizable}, to be described in more detail shortly, for encoding the successive configurations. At the first point within a history when a mismatch between the two debaters is detected, V$_1$ uses its register to flip a fair coin to choose to trace one or the other debater's transmission from that time. The chosen debater's description of what it purports to be the computation history is then checked exactly as in the earlier proof, and the other debater is ignored until a restart is issued by V$_1$ to both players during (or at the end of) that check. The truthful debater always obeys the protocol. In the  case that the other debater's transmission is identical to that of the truthful one, V$_1$ parallels the decision of $T$ depicted by both debaters.

If it sees the debater it is tracing violating the protocol, for instance, making a transition error, V$_1$ rules in favor of the other player.  When it sees a debater announcing the end of a computation history, V$_1$ decides in that debater's favor with some probability, and demands a restart with the remaining probability. Like the program described in the proof of Fact \ref{fact:qAMrecognizable}, V$_1$ is constructed so that the probability of the decision caused by the detection of a transition error in a computation history is guaranteed to be much greater than the probability of the decision caused by mimicking the result described at the end of that history.

A full description of V$_1$ would involve the complete presentation of its  transition functions, including all the operation elements of every superoperator. We will give a higher-level description of the program and its execution at a level that will allow the interested reader to construct the full 2qcfa if she wishes to do so.

 A segment of computation which begins with a (re)start, and ends with a halting or restarting configuration
will be called a ``round" \cite{YS10B}. In each such round, each debater is supposed to transmit a string of the form 
\[
 c_1 \dollar \dollar c_2  \dollar \dollar \cdots \dollar \dollar c_{h-1} \dollar \dollar,
\]
where $c_1$ is the description of the start configuration of $T$ on $w$, each $c_{i+1}$ is the legal successor of the corresponding $c_i$, and $c_{h-1}$ is the last configuration in the computation history before the halting configuration. (V$_1$ will be able to understand whether the successor of $c_{h-1}$ is an accepting or rejecting configuration by focusing on the symbols around the tape head in $c_{h-1}$.) We assume that each configuration description ends with the blank symbol \#, and that the alphabet $\Gamma$ used to write the configurations does not include the $\dollar$ symbol. Fix an ordering of the symbols in $\Gamma$, and let $e(\sigma)$ denote the position of any symbol $\sigma\in\Gamma$ in this ordering. Let $m$ be an integer greater than the cardinality of $\Gamma$, we will fix its value later.

The state of the quantum register is set to 
$
	\ket{\psi_{1,0}} = \left(  1~  0~  0~  0 	 \right)^\mathsf{T}
$
at the beginning of each round. 

Let $ l_{i} $ be the length of 	$ c_1 \dollar \dollar c_2  \dollar \dollar \cdots  c_{i} \dollar \dollar $ ($ i>0 $).

As it reads the string $ w_{1} = c_1 \dollar \dollar $ from the debaters, V$_1$ both compares it with the input to catch a debater that may lie at this point, and also
applies a superoperator corresponding to each symbol of $w_1$ to the register in order to encode $ next(c_1) $ as a number in base $m$ (times a factor that will be described later) into the amplitude of the second quantum state. One operation element of the superoperator $ \mathcal{E}_{1,j}$ applied when reading the $j$th symbol, say, $\sigma$, of $w_1$ is 
	\[ 
		E_{1,j,1}=\frac{1}{d} \left( 
			\begin{array}{cccr}
				1 & ~~0~~ & ~~0~~ & ~~0 \\ 
				e(\sigma) & m & 0 & 0 \\ 
				0 & 0 & 0 & 0 \\ 
				0 & 0 & 0 & 0
			\end{array} 
		\right),
	\]
where $d$ is an integer which has the properties to be described now.\footnote{Note that the ``names" we are using for the superoperators are based on their application position on the debater transmissions;  this same superoperator would be applied again (but would have a different index in our exposition) if another $\sigma$ comes up elsewhere in the transmission of $c_1$.}  Since $ \mathcal{E}_{1,j}$ would not obey the wellformedness criterion (Equation \ref{eq:completeness} in Figure \ref{fig:superoperators}) if its only operation element were $E_{1,j,1}$, we add as many $4\times 4$ rational matrices as necessary as \textit{auxiliary operation elements} of $ \mathcal{E}_{1,j}$ to complement its single \textit{main operation element} $E_{1,j,1}$ to ensure that Equation \ref{eq:completeness} is satisfied. Furthermore, we do this for all superoperators to be described in the rest of the program in such a way that  each of their main operation elements can be written with the same factor $\frac{1}{d}$ in front, as we just did for $E_{1,j,1}$. This is the property that $d$ must satisfy, and such a $d$ can be found easily \cite{Yak12E,YS11A}.

The observation outcome associated with all auxiliary operation elements will be interpreted as a ``restart" command to the debaters. Some operation elements to be described below are associated with halting (acceptance or rejection). The outcomes of all remaining operation elements, including the $E_{1,j,1}$, are ``continue" commands. 

Depending on whether the length of $T$'s configuration description increases as a result of its first move or not,  we have the following cases:
	\begin{itemize}
		\item  If $ |next(c_1)| = |c_1| $, the main operation elements of 
			$ \mathcal{E}_{1,|c_1|} $ and $ \mathcal{E}_{1,|c_1 \dollar|} $ are
			\[ 
				\frac{1}{d} \left( 
					\begin{array}{cccr}
						1 & ~~0~~ & ~~0~~ & ~~0 \\ 
						e(\#) & m & 0 & 0 \\ 
						0 & 0 & 0 & 0 \\ 
						0 & 0 & 0 & 0
					\end{array} 
				\right)
				\mbox{\normalsize and }
				\frac{1}{d} \left( 
				\begin{array}{cccr}
					1 & ~~0~~ & ~~0~~ & ~~0 \\ 
					0 & 1 & 0 & 0 \\ 
					0 & 0 & 0 & 0 \\ 
					0 & 0 & 0 & 0
				\end{array} 
				\right),
			\]
			respectively,
			since the encoding of $ next(c_1) $ is finished by superoperator 
			$ \mathcal{E}_{1,|c_1|} $. 
		\item If $ |next(c_1)| = |c_1|+1 $, and the $|c_1|$th symbol of $next(c_1)$ is $\sigma$, the main operation elements of 
			$ \mathcal{E}_{1,|c_1|} $ and $ \mathcal{E}_{1,|c_1 \dollar|} $ are
			\[
				\frac{1}{d} \left( 
					\begin{array}{cccr}
						1 & ~~0~~ & ~~0~~ & ~~0 \\ 
						e(\sigma) & m & 0 & 0 \\ 
						0 & 0 & 0 & 0 \\ 
						0 & 0 & 0 & 0
					\end{array} 
				\right)
				\mbox{\normalsize and }
				\frac{1}{d} \left( 
				\begin{array}{cccr}
					1 & ~~0~~ & ~~0~~ & ~~0 \\ 
					e(\#) & m & 0 & 0 \\ 
					0 & 0 & 0 & 0 \\ 
					0 & 0 & 0 & 0
				\end{array} 
				\right),
			\]
			respectively,
			since the encoding of $ next(c_1) $ is finished by  superoperator 
			$ \mathcal{E}_{1,|c_1|+1} $. 
	\end{itemize}
	The main operation element of $ \mathcal{E}_{1,|c_1 \dollar \dollar|} $  just multiplies the state vector by $\frac{1}{d}$.

 As long as the debaters are in agreement, and a halting configuration has not been detected,
 each configuration description block $ w_i = c_i \dollar \dollar $ ($ i \geq 2 $) is processed in the following manner.
	The state vector is 
	\[
		\ket{ \widetilde{ \psi_{i,0} } } = \left( \frac{1}{d} \right)^{l_{i-1}} 
				\left( 1 ~~ e(next(c_{i-1})) ~~ 0 ~~ 0 	 \right)^\mathsf{T}
	\]
	at the beginning of the processing.
	The tasks are:
	\begin{enumerate}
		\item To encode   $ c_i $ and $ next(c_i) $ into the amplitudes of the third and fourth quantum states, respectively, during the processing of the substring $ c_i \dollar $, and
		\item To accept (resp. reject) the input if $ next(c_{i}) $ is an accepting (resp. rejecting) configuration, or 
				to prepare for the $ (i+1)^{st} $ configuration description block if $ next(c_i) $ is not a halting configuration, during the processing of the final $ \dollar $
symbol.	
	\end{enumerate}	
	The details of superoperators to encode $ c_i $ and $ next(c_i) $ are similar to the ones given above.
	For each $ j \in \{ 1, \ldots, |c_i|-1 \} $, the main operation element of 
	$ \mathcal{E}_{i,j} $ is
	\[
		\frac{1}{d} \left( 
			\begin{array}{cccr}
				1 & ~~0~~ & ~~0~~ & ~~0 \\ 
				0 & 1 & 0 & 0 \\ 
				e(\sigma) & 0 & m & 0 \\ 
				e(\gamma) & 0 & 0 & m
			\end{array} 
		\right),
	\]
	where $\sigma$ and $\gamma$ are the $j$'th symbols of $c_i$ and $next(c_i)$, respectively.
	 $ \mathcal{E}_{i,|c_i|} $ and $ \mathcal{E}_{i,|c_i \dollar|} $ handle the two cases where $e(next(c_i))$ may or may not be longer than $e(c_i)$, similarly to the superoperators seen for the processing of $c_1$ \cite{Yak12E}.
	Thus, before applying $ \mathcal{E}_{i,|c_i\dollar\dollar|} $, the state vector becomes
	\begin{equation}
		\label{eq:state-vector-end-2}
		\ket{ \widetilde{ \psi_{i,|c_i \dollar|} } } = \left( \frac{1}{d} \right)^{l_i-1} 
		\left(  1 ~~ e(next(c_{i-1}) ~~ e(c_i) ~~ e(next(c_i)) 	 \right)^\mathsf{T}.
	\end{equation}
	
	Task (2) described above is to be realized by operator $ \mathcal{E}_{i,|c_i\dollar\dollar|} $, which has one main operation element, as described in Figure \ref{fig:bilmemkac}. 
\begin{figure}[!ht]
\centering
\footnotesize
\fbox{
\begin{minipage}{0.96\textwidth}
	\begin{tabular}{l|c}
		DESCRIPTION & OPERATOR
		\\
		\hline
		\begin{minipage}{0.75\textwidth} 
		\vspace*{2pt}
			If $ next(c_{i}) $ is a halting configuration, then  this operator is applied with the action of acceptance or rejection, as indicated by  $ next(c_{i}) $, associated with the outcome. The input is thereby accepted  or rejected with probability 
		$
			p_{1}=\left( \frac{1}{d} \right)^{2l_i}.
		$
		The round is terminated in this case.
		\vspace*{2pt}
		\end{minipage}
		\hspace{1pt}
		&
		$~
			\dfrac{1}{d} \left( 
				\begin{array}{ccrr}
					1 & ~~0~~ & 0 & ~~0 \\ 
					0 & 0 & 0 & 0 \\ 
					0 & 0 & 0 & 0 \\ 
					0 & 0 & 0 & 0
				\end{array} 
			\right)
		$
		\\
		\hline
		\begin{minipage}{0.75\textwidth}
			If $ next(c_{i}) $ is not a halting configuration, then this operator is applied. The state vector becomes 
		$
			\ket{ \widetilde{ \psi_{{i+1},0} } } = \left( \frac{1}{d} \right)^{l_i} 
			\left(  1 ~~e(next(c_{i})) ~~ 0 ~~ 0 	 \right)^\mathsf{T}.
		$
		\end{minipage}
		&
		\begin{minipage}{0.21\textwidth}
		\vspace{2pt}
		$
			\dfrac{1}{d} \left( 
				\begin{array}{ccrr}
					1 & ~~0~~ & 0 & ~~0 \\ 
					0 & 0 & 0 & 1 \\ 
					0 & 0 & 0 & 0 \\ 
					0 & 0 & 0 & 0
				\end{array} 
			\right)
		$
		\end{minipage}
	\end{tabular}
\end{minipage}
}
\caption{Operation element for preparing for the next configuration in the debater stream}
\label{fig:bilmemkac}
\end{figure}

After a disagreement between the debaters is noticed, the verifier picks a debater with probability $ \frac{1}{2} $. (A fair coin can be implemented in this setup by the superoperator $
		\mathcal{E} = \left\lbrace E_{h_{1}} = \frac{1}{2} I, E_{h_{2}} = \frac{1}{2} I,
		E_{t_{1}} = \frac{1}{2} I, E_{t_{2}} = \frac{1}{2}I \right\rbrace
	$
	with the outcomes for the first two operation elements interpreted as heads and the other ones as tails, for instance.)
The processing of the transmission of the chosen debater is the same as the processing of the common stream, except for the last superoperator dealing with the final $\dollar$ symbol of each description block. That superoperator has two main operation elements.
	The first one realizes the first actual transition correctness check:
	\[
		\frac{1}{d} \left( 
					\begin{array}{ccrr}
						0 & ~~0~~ & 0 & ~~0 \\ 
						0 & 1 & -1 & 0 \\ 
						0 & 0 & 0 & 0 \\ 
						0 & 0 & 0 & 0
					\end{array} 
				\right).
	\]
	The associated action of this operation element is to reject the claim of this debater.
	Therefore, when talking to P0 (resp., P1), the input is accepted (resp., rejected) with probability 
	$
		\left( \frac{1}{d} \right)^{2l_i} \left( e(next(c_{i-1})) - e(c_i) \right)^{2},
	$
	which is zero if the check succeeds ($ next(c_{i-1}) = c_i $), and
	is at least 
	$
		p_{2}=\left( \frac{1}{d} \right)^{2l_i} m^{2}
	$
	if the check fails ($ next(c_{i-1}) \neq c_i $).
	Since the last symbols of $ next(c_{i-1}) $ and $ c_i $ are identical,
	the value of $ | e(next(c_{i-1})) - e(c_i) | $ can not be less than $ m $ in this case.
	
	The second main operation element is the one already described in Figure \ref{fig:bilmemkac}, which either halts and decides, or readies the state vector for scanning the next configuration (with small probability) depending on whether that next configuration is a halting one or not. 
	
	Note that if the chosen debater is cheating and never sends any $\dollar$'s, then the communication with it terminates with probability 1 without any decision.

The overall acceptance probability of such a ``program with restart" equals the ratio of the acceptance probability to the halting probability in a single round \cite{YS10B}. The probability that the truthful debater will be selected after the disagreement is $\frac{1}{2}$. If this happens, V$_1$ will reach a halting state with the correct decision with some small probability, and restart with the remaining probability. In case the other debater is selected, there are two different possibilities of deception. If that debater presents an infinite ``configuration", V$_1$ will restart sooner or later. Otherwise, if a finite but spurious history with one or more incorrect transitions is presented, V$_1$ 
 may make the wrong decision with some small probability $p_1$, but this is more than compensated by the much greater probability $p_2$ of its making the correct decision earlier on, when the transition error(s) in this history were detected. Overall, the error rate of $\epsilon$ of V$_1$ is bounded by $\frac{p_1}{p_1+p_2}=\frac{1}{m^2+1}$, and can be tuned down to any desired positive value by choosing $m$, the base of the encoding used, to be a sufficiently large integer.
\end{proof}

\section{Debates with zero error}\label{section:zeroerror}

In classical computation, the benefits of using random bits come at the cost of incurring some nonzero probability of error; and ``zero-error" probabilistic finite automata can be shown trivially to be no more powerful than their deterministic counterparts. We will now show that randomness without some tolerance of error is not useful for classical finite-state verifiers of debates, and then prove that things change in the quantum case.

\begin{theorem}\label{theorem:derandomize}
The computational power of a public-coin probabilistic debate checking system is reduced to the level of its deterministic counterpart when the verifier is not allowed to make any error in its final decision. 
\end{theorem}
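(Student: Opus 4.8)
The plan is to show that a public-coin probabilistic verifier that is forbidden to err can be replaced by a deterministic two-way finite-state verifier; since by \cite{LLS78} a deterministic verifier is equivalent to an alternating two-way automaton and hence captures exactly the regular languages, and since a deterministic verifier is a degenerate probabilistic one, this identifies the zero-error class with the regular languages. Throughout I would regard a debate against the probabilistic verifier $V$ on a fixed input $w$ as a finite stochastic game whose positions are the $O(|w|)$ configurations of $V$ (state together with head position): the two debaters adversarially choose communication symbols, and each coin throw of $V$ is a move of \emph{nature}. The zero-error hypothesis says that for $w\in L$ the prover P1 has a strategy forcing $\Pr[\text{accept}]=1$ against every strategy of the refuter P0, and symmetrically for $w\notin L$; since the class of debatable languages is closed under complement it suffices to analyse one of the two cases.

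The heart of the argument is that on a finite-state process the event ``$\Pr[\text{accept}]=1$'' depends only on which coin outcomes carry positive probability, not on the numerical values of those probabilities. I would make this precise using the standard fact that in a finite Markov chain a set is reached with probability $1$ exactly when it is reachable from every reachable state; applying it --- after observing that memoryless strategies suffice on both sides, so that the Markov chains in question are finite once the strategies are fixed --- and then quantifying over all P0 strategies turns ``$w\in L$'' into a purely combinatorial winning condition: there is a P1 strategy $\pi_1$ such that (i) against every P0 strategy and every resolution of the coins no rejecting halting configuration is ever entered, and (ii) against every P0 strategy, from every position reachable under $\pi_1$ there is still some resolution of the remaining coins that, with P1 continuing to follow $\pi_1$, reaches a halting configuration. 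A short reachability-game argument then upgrades (ii) to the stronger statement that ``nature'' can in fact force a halting configuration within $O(|w|)$ steps from every reachable position.

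With the probabilities eliminated, the game has only adversarial and nondeterministic choices left, so I would realise it as the run of an ordinary two-way alternating finite automaton: the debaters' symbol choices supply its $\exists$ and $\forall$ moves, and each former coin throw becomes a further alternating choice of the verifier --- universal where it serves the safety requirement (i), existential where it serves the liveness requirement (ii). The $O(|w|)$ bound from (ii) guarantees that whenever P1 wins there is already a bounded-length accepting run, so the usual halting-acceptance convention for alternating automata is adequate, and \cite{LLS78} then gives regularity. (Equivalently, one may keep the verifier deterministic in the literal sense and delegate the former coin throws to the debaters, with the refuter announcing the safety-critical outcomes and the prover the liveness-critical ones.) The reverse inclusion --- every regular language is decided by a deterministic, hence zero-error, verifier --- is trivial.

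The step I expect to be the main obstacle is the qualitative reduction, and within it the liveness half (ii): the most damaging move available to a dishonest debater is not to lie but to drag the computation out forever, so one has to argue carefully that the probability-$1$ requirement forces the honest debater to keep a halting configuration perpetually reachable, that this can be maintained by an online (indeed memoryless) strategy, and that nature's role in it is genuinely existential rather than adversarial. Correctly interleaving the safety game with the reachability game, and then confirming that the resulting alternating automaton does not escape the regular languages, is where the real work lies; the determinization step and the reverse inclusion are routine.
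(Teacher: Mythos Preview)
Your proposal is correct, but it is far more elaborate than the paper's own proof, which is a single sentence: one simply hard-wires an arbitrary fixed sequence of coin outcomes into $V$ and observes that the resulting machine is deterministic. Because the coins are public, the debaters already condition on them, and because the honest debater's probability-$1$ guarantee rules out the wrong halting state on \emph{every} positive-probability coin prefix, the derandomized verifier cannot give the wrong answer either. That is the entire argument in the paper.

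Where your write-up genuinely adds something is precisely the point you flag as the main obstacle: liveness. The paper's one-liner shows that after fixing the coins the honest debater can still \emph{avoid the wrong decision}, but it says nothing about why the resulting deterministic game must \emph{terminate} with the right one; a verifier that accepts only on ``tails'' will loop forever once ``all heads'' is hard-wired. Your reduction to a combinatorial safety-plus-reachability game, together with the memoryless/finite-arena argument that lets ``nature'' be replaced by an existential choice bounded in $O(|w|)$ steps, fills exactly this gap. So the two approaches buy different things: the paper's is maximally economical and relies on the reader either to adopt a convention under which non-halting runs are harmless or to supply the liveness argument silently; yours is self-contained and makes that argument explicit, at the cost of substantially more machinery than the authors intended for what they present as a triviality.
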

\begin{proof}
Any such zero-error probabilistic verifier  V  can be replaced by a deterministic one obtained by hard-wiring an arbitrary sequence of coin outcomes in V.
\end{proof}

As mentioned in Section \ref{section:prel}, languages with debates checkable by deterministic finite state verifiers  are regular, whereas probabilistic verifiers can handle some nonregular languages when some error is allowed. We will now see that our small quantum verifiers can do much more with zero error. 

\begin{theorem}\label{theorem:zeroerrorquantum}
Every language in the class $\mathsf{E}$ has debates checkable by  a 2qcfa with four quantum states, and with zero error.
\end{theorem}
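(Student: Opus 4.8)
The plan is to modify the verifier $V_1$ of Theorem~\ref{theorem:decidable} into a verifier $V_2$ that decides $L$ but can never halt with an incorrect verdict. Fix a deterministic decider $T$ for $L$ running in time $2^{cn}$; we may assume that $T$ maintains on its tape a step counter of at most $cn$ bits and halts, rejecting, the instant that counter is about to overflow or is found malformed, so that every run consistent with $T$'s transition function, started from \emph{any} configuration, is finite and of length at most $2^{O(n)}$. As in Theorem~\ref{theorem:decidable}, both debaters are asked to stream the unique computation history of $T$ on $w$, but now each configuration block is decorated with a position counter (reset per block) and with $T$'s step counter, presented bit by bit in a format that lets $V_2$ verify an increment-by-one by a single finite-state pass over the interleaved bits. $V_2$ follows the common stream, verifying configurations by the encode--and--subtract technique of Theorem~\ref{theorem:decidable}, until the first symbol on which the two transmissions disagree. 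If no disagreement ever occurs and a halting configuration is reached, $V_2$ simply parallels the verdict appearing there; this is sound, because the common stream is then the true history, one debater being honest. After the first disagreement, however, $V_2$ flips a fair coin, commits to tracing one debater, and from then on halts \emph{only} by positively catching the traced debater in a provable violation, ruling against it.

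The violations $V_2$ looks for are of two kinds. \textbf{Transition violations}: a wrong start configuration (checked exactly by the classical control against the input tape) or a step $c_{i+1}\neq next(c_i)$, the latter caught by the encode--and--subtract move, whose ``violation'' outcome fires only on a genuine discrepancy --- though with only some probability, the rest of the weight going to a restart. \textbf{Counter violations}: a position or step counter that is not incremented by exactly one (caught by a purely finite-state pass, which never restarts on its own) or a counter presented with more than $cn$ bits (caught exactly by sweeping the input head a constant number of times and counting the passes in the classical control). Since four quantum states cannot host all of these checks simultaneously, $V_2$ alternates between \emph{transition-check rounds} and \emph{counter-check rounds}, so that rounds of each type recur infinitely often; the counter checks consume no quantum states, and the transition checks reuse exactly the four-state register $V_1$ already uses for its encodings.

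Zero error then follows from three observations. (i) The honest debater's transmission is the true history with correct counters, so $V_2$ never catches it in any violation and never reaches a no-disagreement halting configuration whose verdict is wrong; hence $V_2$ never rules against the honest side. (ii) The dishonest debater, to claim the opposite verdict, must deviate from the true history; since $T$ is deterministic, $c_1$ is checked, and the step counter forces every consistent continuation to terminate while the counter checks forbid an unbounded transmission, its completed stream must contain a transition violation at some finite depth $\le 2^{O(n)}$, or else a counter violation. (iii) Because that depth is bounded by a function of $n$ alone, a transition violation is caught in a single transition-check round with probability at least some fixed $\rho>0$ no matter how the dishonest debater adapts, and a counter violation is caught in a single counter-check round with probability $1$; so over the infinitely many rounds of each type the violation is caught, and $V_2$'s coin lands on the dishonest debater, with probability $1$. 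Thus $V_2$ halts with probability $1$ and always with the correct verdict, using only the four quantum states of $V_1$; the case $w\notin L$ is symmetric.

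The step I expect to be the main obstacle --- and the reason this needs more than a cosmetic change to Theorem~\ref{theorem:decidable} --- is exactly the interplay forced by the probabilistic nature of the encode--and--subtract comparison: since that comparison can reveal a discrepancy only with probability proportional to its square, $V_2$ may never imitate the verdict of a traced history in which it has merely failed to find a fault, and must reach its verdict only by catching a fault. This in turn makes it indispensable to (a) bound the depth at which a dishonest debater can hide its unavoidable fault --- which is where $\mathsf{E}=\mathsf{DTIME}(2^{O(n)})$ and the use of the input head as an $O(n)$-bit counter-length checker enter, since only a constant number of sweeps can be counted by a constant control --- and (b) choreograph the transition checks, the two counter checks, and the agreement-case verdict imitation so that they all fit into four quantum states and a finite classical control.
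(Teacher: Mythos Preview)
Your proposal contains a genuine gap arising from the decision to \emph{alternate} between transition-check rounds and counter-check rounds. In the model under consideration the verifier is fully transparent: both debaters see all measurement outcomes and hence know the verifier's classical state, including which kind of round is currently in progress. A dishonest debater can therefore tailor its behaviour to the round type. In a transition-check round it can refuse ever to emit the block delimiter (sending an ``infinite'' configuration); since that round carries no counter-length check, the encode--and--subtract comparison is never triggered, and the round simply ends in a restart with probability $1$. In a counter-check round it can send a bounded, well-formed history whose counters are flawless but which contains a transition error; since that round carries no transition check, nothing is detected and the round again ends in a restart. Under this adaptive strategy $V_2$ never halts after a disagreement, so the honest side cannot force a verdict with probability $1$, and the zero-error claim fails. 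Your item~(ii) implicitly treats the liar's stream as fixed across rounds, which the public-coin setting does not allow.

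The repair is not hard, but it changes your architecture: the counter checks are purely classical (finite control plus the input head), so there is no reason they cannot be carried out \emph{in every round simultaneously} with the quantum encoding; doing so removes the cheater's ability to send unbounded configurations in any round. The paper, however, takes a cleaner route that avoids the whole counter apparatus. It exploits $\mathsf{E}=\mathsf{ASPACE}(n)$ rather than $\mathsf{E}=\mathsf{DTIME}(2^{O(n)})$: the simulated machine is an alternating TM using linear space, so every configuration has length at most $kn$, and the input head can enforce this bound \emph{directly} by sweeping back and forth---no auxiliary counters, no interleaved bit streams. The alternation of the ATM also dovetails naturally with the debate: P1 supplies the existential choices and P0 the universal ones, both then transmitting the resulting configuration; the first disagreement is necessarily in the very configuration where the liar departs from the (now choice-determined) truth, so after the coin flip the verifier performs exactly \emph{one} encode--and--subtract comparison on the traced debater and then restarts. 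This yields zero error with a much simpler verifier than the one you sketch.
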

\begin{proof}
Since $\mathsf{E}=\mathsf{ASPACE}(n)$ (the class of languages recognized by alternating Turing machines (ATMs) using linear space) \cite{CKS81}, it is sufficient to show how to trace the execution of a linear-space alternating Turing machine (ATM). Let $A$ be an ATM that decides a language $L$, using at most $kn$ tape squares for its computation on any string of length $n$, for a positive integer $k$. Assume, without loss of generality, that $A$ alternates between existential and universal states at each step, and that the start state is an existential state. 

We construct a 2qcfa V$_0$  that checks debates on membership in $L$. V$_0$ is a variant of the verifier V$_1$ described in the proof of Theorem \ref{theorem:decidable}. In this version, the debaters play a game to produce a computation history of $A$ on the input $w$ of length $n$. The protocol dictates that P1 starts by announcing the first existential choice to be made. Both debaters then transmit the start configuration of $A$ on $w$ parallelly. P0 then announces the first universal choice as a response to the first move of P1, followed by both debaters transmitting the configuration that $A$ would reach by executing the choice announced by P1 in the beginning. In general, the choice that determines configuration $c_{i+1}$ is announced by the corresponding debater before the transmission of configuration $c_i$. As usual, the verifier may order the debaters to restart the whole thing at any step.

After using it to check that the first configuration description is accurate,  V$_0$ starts moving its reading head on the input tape back and forth at the appropriate speed to make sure that neither debater sends a configuration description longer than $nk$ symbols in the rest of the transmission, deciding against any debater seen to violate this rule. As described for the verifiers in our earlier proofs, V$_0$ scans the parallel  transmissions, encoding the last configuration descriptions it has seen, as well as their legal successors according to the choices that have already been announced by the debaters. If the debaters send the same complete history, V$_0$ halts and announces the result in that history. If the debaters disagree, V$_0$ flips a coin and picks one debater's transmission to trace, just like V$_1$. Unlike V$_1$, however, V$_0$ does not trace this debater until it sends a halting configuration. Instead, V$_0$ just performs the transition check between the previously sent configuration and the presently sent one,\footnote{If the chosen debater attempts to send an exceedingly long configuration at this point, it will be caught by the control implemented by the input head.} and then issues a restart command. V$_0$ does not imitate any decision of $A$ that it may see in the transmission of the chosen debater; the only way that V$_0$ can halt without any restarts after choosing a debater is by detecting a transition error, and deciding in favor of the other debater.

If both debaters obey the protocol, then P1 will always be able to demonstrate an accepting computation history of $A$ on $w$ if $w\in L$, and P0 will always be able to demonstrate a rejecting computation history of $A$ on $w$ if $w\notin L$. So let us examine the case where one debater is lying.

If V$_0$ chooses the truthful debater to trace, it will detect no error, and so will restart with certainty. If it chooses the other debater, it will detect a transition error and announce the correct decision with some probability, and restart with the remaining probability. There is no possibility that V$_0$ can make an error. 
\end{proof}

\section{Polynomial-time debates}\label{section:poly}
Ambainis and Watrous' seminal paper \cite{AW02}, which introduced the 2qcfa model, included a demonstration of the capability of these machines to recognize the language $\{a^nb^n | n>0\}$ in polynomial expected time, a feat that is impossible for their classical counterparts. We start this section with a quick review of  the technique used there (to be called ``the AW trick" from now on) for comparing the lengths of two substrings in the input, which is  also useful for polynomial-time 2qcfa verifiers while scanning debater streams. Note that we will allow computable irrational amplitudes, as well as rationals, in the 2qcfa verifier descriptions, and  show that they outperform classical verifiers with arbitrary real transition probabilities.\footnote{2qcfa's with arbitrary real amplitudes are known to recognize languages of every Turing degree in polynomial time \cite{SY14B}.}

It is well known that the operation
	\[
		\left( 
					\begin{array}{cc}
						\cos \theta & ~-\sin \theta\\ 
						\sin \theta & ~\cos \theta
					\end{array} 
				\right),
	\]
which we denote $U_\theta$, describes a $\theta$-radian rotation of the vector representing the superposition of a single qubit with states $q_0$ and $q_1$ on the Cartesian plane with coordinate axes corresponding to  $\ket{q_0}$ and $\ket{q_1}$. To compare the lengths of a block of  $a$'s and a block of  $b$'s in a string, one  starts with the qubit at state $q_0$, corresponding to the ``undecided" outcome, and then applies $U_{\sqrt{2}\pi}$ (resp. $U_{-\sqrt{2}\pi}$) each time one reads another symbol from the $a$ (resp. $b$) block during a left-to-right scan of the string. The qubit is  measured at the end of the second block, and the process ends if one observes the state $q_1$, corresponding to the ``not-equal" outcome. 

It is certain that $q_1$ will be observed only if the lengths of the $a$ and $b$ blocks are indeed unequal, since otherwise the clockwise and counterclockwise rotations cancel out perfectly, and the qubit ends up where it started. Observing $q_0$, on the other hand, is inconclusive, since unequal lengths will cause a superposition of $\ket{q_0}$ and $\ket{q_1}$ in the qubit. Fortunately, it is shown in \cite{AW02} that the probability that $q_1$ will be measured in this case is at least $\frac{1}{2n^2}$, where $n$ is the length of the input.

Continuing our description of the AW trick, if the 2qcfa observes $q_0$ at the end of the second block, it initiates a subroutine consisting of random walks of its input head on the  tape. This subroutine runs in polynomial expected time, and  ends with the outcome ``equal" (regardless of the content of the input) with probability  $\frac{1}{4n^2}$, and the outcome ``undecided" with the remaining probability \cite{SY14A}, in which case the algorithm resets the qubit to $q_0$, and restarts the whole thing from the left end of the tape.

So if the two block lengths are equal, we will never obtain the ``not-equal" outcome, and definitely halt with the ``equal" outcome sooner or later. If they are not equal, the probability of arriving at the correct answer is at least twice that of the wrong one in every iteration, and this ratio can be improved by tuning the random walk subroutine without excessively increasing its runtime. The algorithm will halt after polynomially many restarts with high probability.

We are now ready to compare the power of classical and quantum constant-space, polynomial-time verifiers.

One limitation of the classical version is established in \cite{CHPW98}, where it is shown that any language recognized with bounded error by a 2apfa in polynomial time must have 1-tiling complexity bounded above by $2^{polylog(n)}$. The authors then go on to point out that the binary palindromes language $\mathtt{PAL}$ has 1-tiling complexity that is exponential in $n$. Although the exposition in \cite{CHPW98} is on machines which can only toss fair coins, an examination of the proofs in that paper shows that they apply to 2apfa's that use arbitrary real transition probabilities as well, and so we have
\begin{fact}
 $\mathtt{PAL}$ has no debates checkable by a polynomial-time constant-space probabilistic verifier for any error bound less than $\frac{1}{2}$.
\end{fact}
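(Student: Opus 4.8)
The plan is to derive the statement as an immediate corollary of the $1$-tiling complexity lower bound of Condon, Hellerstein, Pottle and Wegman \cite{CHPW98}, exactly along the lines foreshadowed in the paragraph preceding the statement. First I would make the connecting observation precise: as already recorded in Section~\ref{section:prel}, a public-coin probabilistic constant-space debate checking system — with the prover playing the existential role and the refuter the universal one — is precisely a two-way alternating probabilistic finite automaton (2apfa) in the sense of \cite{CHPW98}, the public-coin condition being exactly what makes the two formalizations coincide. Under this identification, a polynomial-time constant-space probabilistic verifier that checks debates for a language $L$ with a fixed error bound $\epsilon<\frac12$ is the same thing as a polynomial-time 2apfa recognizing $L$ with bounded error (the bounded-error condition transferring verbatim, since $\epsilon$ is a constant strictly below $\frac12$).

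Next I would argue by contradiction. Suppose $\mathtt{PAL}$ had such a verifier; then by the previous paragraph there is a polynomial-time bounded-error 2apfa recognizing $\mathtt{PAL}$. The key technical input is the theorem of \cite{CHPW98}: every language recognized with bounded error by a polynomial-time 2apfa has $1$-tiling complexity at most $2^{polylog(n)}$. As the text above the statement points out, although \cite{CHPW98} phrases this for fair-coin machines, an inspection of its proofs shows the bound holds for 2apfa's using arbitrary real transition probabilities, which is the generality required here (our classical verifiers are allowed computable real probabilities). Consequently $\mathtt{PAL}$ would have $1$-tiling complexity $2^{polylog(n)}$. But $\mathtt{PAL}$ is exactly the witness used in \cite{CHPW98} to exhibit a language of $1$-tiling complexity $2^{\Omega(n)}$ — the $2^{n}$ strings $x x^{R}$ with $x\in\{0,1\}^{n}$ force exponentially many distinct tiles in any $1$-tiling of the associated communication matrix, by a standard fooling-argument. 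For all sufficiently large $n$ we have $2^{\Omega(n)}>2^{polylog(n)}$, a contradiction; hence no such verifier exists.

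I do not expect a genuine obstacle here, since every nontrivial ingredient is quoted from the literature. The only place where care is needed is the first step: one should state explicitly that the debate/2apfa dictionary preserves both the polynomial time-bound and the bounded-error acceptance condition, and that extending the \cite{CHPW98} lower bound from fair coins to arbitrary (computable) real transition probabilities is harmless. Both points are routine, but worth making precise so that the appeal to \cite{CHPW98} is fully justified; everything beyond that is a direct citation.
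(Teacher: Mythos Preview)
Your proposal is correct and follows essentially the same approach as the paper: the Fact is derived directly from the results of \cite{CHPW98} by identifying public-coin constant-space debate systems with 2apfa's, invoking the $2^{polylog(n)}$ upper bound on $1$-tiling complexity for polynomial-time bounded-error 2apfa's, and contrasting it with the exponential $1$-tiling complexity of $\mathtt{PAL}$, together with the observation that the argument in \cite{CHPW98} extends from fair coins to arbitrary real transition probabilities. Your write-up is slightly more explicit about the fooling-set argument for $\mathtt{PAL}$ and about what the debate/2apfa dictionary must preserve, but the content is the same.
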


Zheng et al. \cite{ZQG05} define an interactive proof  model called QAM(2QCFA), which is easily seen to be equivalent in power to Yakary{\i}lmaz's qAM (described in Section \ref{section:prel}), and use the AW trick for the verifier to show the following
\begin{fact} 
The complement of $\mathtt{PAL}$ (i.e. the language of nonpalindromes) has a quantum Arthur-Merlin system, where the verifier uses just one qubit, and terminates with high probability in polynomial-time. 
\end{fact}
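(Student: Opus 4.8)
The plan is to let Merlin point out a witnessing mismatch and let Arthur verify it in constant space via the AW trick. A length-$n$ string $w=w_1\cdots w_n$ fails to be a palindrome exactly when there is an index $i\in\{1,\dots,n\}$ with $w_i\neq w_{n+1-i}$. Since Arthur cannot store such an index, the protocol asks Merlin to transmit it \emph{twice} in unary, as a message of the form $1^{i}\,\#\,1^{j}\,\#$ (on a nonpalindromic input an honest Merlin sets $i=j$ to a true witness). Arthur reads the first block $1^i$ while walking its input head one cell to the right per symbol, starting from $\Cent$, so that when the first $\#$ arrives the head sits on cell $i$; Arthur records the scanned symbol $a:=w_i$ in its finite control. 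It then walks the head to $\dollar$, reads the second block $1^j$ while walking the head one cell to the left per symbol, and on the second $\#$ records $b:=w_{n+1-j}$. Arthur accepts iff (i) $i=j$ and (ii) $a\neq b$. Condition (ii) is read off the two recorded symbols; condition (i), a comparison of two lengths, is what the single qubit is for. If at any point a unary block would push the head past an endmarker (so $i>n$ or $j>n$, or Merlin's message is infinite), Arthur halts and rejects.

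To check $i=j$, Arthur applies $U_{\sqrt{2}\pi}$ to its qubit for each symbol of the first block and $U_{-\sqrt{2}\pi}$ for each symbol of the second, measuring at the second $\#$. As in the AW trick the net rotation is $(i-j)\sqrt{2}\pi$, so the ``not-equal'' outcome $q_1$ occurs with probability $0$ when $i=j$ and with probability at least $\tfrac{1}{2n^2}$ when $i\neq j$ (using $|i-j|<n$ and the irrationality of $\sqrt{2}$). If $q_1$ is observed, Arthur rejects. If $q_0$ is observed, Arthur runs the AW random-walk subroutine on the input tape; with probability $\Theta(1/n^2)$ it returns ``equal'', in which case Arthur finalizes its decision (accept iff $a\neq b$), and otherwise it resets the qubit to $q_0$, returns the head to $\Cent$, signals Merlin to restart, and repeats the round.

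Correctness then follows the familiar ``program-with-restart'' analysis. \emph{Completeness:} if $w\notin\mathtt{PAL}$, honest Merlin repeatedly sends $1^{i}\#1^{i}\#$ for a true witness $i$; then $i=j$, no overflow occurs, the rotations cancel so $q_1$ never appears, each round halts only through the random walk and always with acceptance, the expected number of rounds is $O(n^2)$, and each round runs in expected polynomial time, so Arthur accepts and halts in polynomial time with high probability. \emph{Soundness:} if $w\in\mathtt{PAL}$, then for any Merlin message with $i=j$ the recorded symbols satisfy $a=w_i=w_{n+1-i}=b$, so Arthur rejects; the only route to a wrong acceptance is $i\neq j$ together with the random walk falsely reporting ``equal'' in some round before the qubit catches the discrepancy, and since $\Pr[q_1\mid i\neq j]\ge\tfrac{1}{2n^2}$ dominates the walk's false-``equal'' probability (which can be driven down by tuning the walk), the overall wrong-acceptance probability, being the ratio of the per-round wrong-acceptance probability to the per-round halting probability, can be pushed below any desired constant; moreover Arthur halts in polynomial time with high probability against every Merlin. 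The verifier uses one qubit, $O(1)$ classical states, and only the computable amplitudes $\cos\sqrt{2}\pi,\sin\sqrt{2}\pi$, which are permitted in this section.

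The main obstacle is the constant-space bookkeeping that makes all this fit together: Arthur must keep the deterministic head-walk exactly synchronized with its consumption of Merlin's unary symbols so that precisely the cells $w_i$ and $w_{n+1-j}$ are sampled, while the lone qubit simultaneously accumulates the rotation and the restart logic is threaded through a fixed finite control; one must also check that a cheating Merlin gains nothing from malformed or over-long messages (all caught by the bounded head) and balance the two AW-trick probabilities so the net error stays bounded. These are, however, exactly the manipulations already carried out for the $\{a^nb^n\}$ algorithm of \cite{AW02} and in the proof of Fact \ref{fact:qAMrecognizable}, so the construction goes through; it is in essence the QAM(2QCFA) protocol of \cite{ZQG05}.
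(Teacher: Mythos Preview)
The paper does not give its own proof of this fact but attributes it to Zheng et al.\ \cite{ZQG05}, noting only that their verifier ``use[s] the AW trick.'' Your proposal is a correct and detailed instantiation of exactly that approach---Merlin names a mismatch position twice in unary, Arthur walks the head to sample $w_i$ and $w_{n+1-j}$ while the single qubit accumulates the $(i-j)\sqrt{2}\pi$ rotation, and the AW rotation/random-walk pair bounds the error---and you yourself identify it as the \cite{ZQG05} protocol, so it aligns with what the paper cites.
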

Swapping the accept and reject states of that verifier for nonpalindromes, and noting that the resulting setup can be viewed as a debate over membership in $\mathtt{PAL}$, where P1 remains silent, implicitly challenging P0 to try and prove that the input is not a palindrome, we arrive at the following result.
\begin{corollary}
The incorporation of even a single qubit to otherwise classical constant-space, polynomial-time verifiers enlarges the class of debatable languages to include $\mathtt{PAL}$.
\end{corollary}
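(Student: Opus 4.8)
The plan is to recycle the quantum Arthur--Merlin verifier $V'$ for the nonpalindrome language supplied by the preceding Fact, and repackage it as a debate verifier $V$ for $\mathtt{PAL}$ that listens to exactly one of its two debaters. Concretely, $V$ simulates $V'$ step for step, interpreting whatever appears in P0's slot of the communication cell as the message stream that the lone prover of the qAM protocol would have produced, and never consulting P1's slot at all; the only change to the control is that the accepting and rejecting states of $V'$ are interchanged, so that $V$ rejects precisely when $V'$ would have been convinced that $w$ is a nonpalindrome. Since $V'$ runs with a single qubit, constant classical memory, and halts in polynomial time with high probability, the debate verifier $V$ inherits all of these resource bounds.

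It then remains to verify the two clauses of the definition of debate checking for $L=\mathtt{PAL}$. If $w\in\mathtt{PAL}$, then $w$ is not a nonpalindrome, so soundness of the qAM system says that no prover strategy, and in particular no strategy available to P0, makes $V'$ accept with probability more than the qAM error bound $\epsilon$; hence $V$ rejects with probability at most $\epsilon$, i.e.\ accepts with probability at least $1-\epsilon$, whatever P0 transmits. Since $V$ ignores P1, P1's ``stay silent'' strategy trivially realizes this. Symmetrically, if $w\notin\mathtt{PAL}$, then $w$ is a nonpalindrome, so completeness of the qAM system furnishes a prover strategy making $V'$ accept with high probability; letting P0 play exactly that strategy forces $V$ to reject with high probability, no matter what P1 does. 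Both conditions hold, and picking the qAM error bound small enough delivers any target $\epsilon<\frac12$.

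I would then close by observing that this is a strict gain: deterministic constant-space debate verifiers check only the regular languages, whereas $\mathtt{PAL}$ is nonregular, so a single qubit already buys debating power unavailable classically even in polynomial time -- and, unlike the two-sided-error results of Section~\ref{section:decidable}, the improvement is visible already for polynomial-time verifiers (indeed, by the earlier Fact, even against probabilistic verifiers with arbitrary real transition probabilities).

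The argument is genuinely short, so the ``main obstacle'' is less a technical hurdle than a matter of keeping the role reversal straight: one must be careful that the qAM prover corresponds to P0 (the party claiming $w\notin\mathtt{PAL}$) and that the accept/reject swap then points the verdicts the right way. The one point that needs an explicit guard is termination when $w\in\mathtt{PAL}$ and a cheating P0 tries to stall $V$ by sending an arbitrarily long bogus ``witness''; this is handled exactly as for V$_0$ in the proof of Theorem~\ref{theorem:zeroerrorquantum}, by running the input head as a linear-length clock and ruling against P0 -- that is, accepting $w$, which is the correct verdict here -- as soon as P0's message overruns the bound. With that clock in place the AW-trick random walks cap the running time with high probability irrespective of P0's play, and the polynomial-time claim goes through.
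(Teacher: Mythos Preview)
Your proposal is correct and follows essentially the same route as the paper: take the qAM verifier for nonpalindromes, swap its accept and reject states, let P0 play the role of the lone prover while P1 is ignored, and read off the debate-checking guarantees from completeness and soundness of the qAM system. The paper's own argument is just this one-line observation; your extra paragraph guarding against a stalling P0 is cautious but not strictly needed here, since the cited Fact already asserts that the qAM verifier terminates with high probability in polynomial time, a guarantee that is understood to hold against every prover strategy.
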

Note that it is not even known whether there exists a nonregular language which has debates checkable by a classical constant-space,  polynomial-time verifier \cite{CHPW98}, and both $\mathtt{PAL}$ and its complement are context-free. We will now demonstrate several non-context-free languages that have debates checkable by the quantum version. Since these languages are unary, the verifier will have to depend solely on the debater stream and the input length during its execution.

\begin{theorem}\label{theorem:uprime}
The language $\mathtt{UPRIME}=\{1^p~\vert ~p\mbox{ is prime}\}$ has polynomial-time debates checkable 
by a 2qcfa with just two qubits.
\end{theorem}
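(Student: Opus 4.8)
The plan is to use the debate only superficially: P1 will play no role (exactly as in the $\mathtt{PAL}$ corollary), and the whole burden falls on P0, who claims that $1^n$ is composite and must therefore exhibit a nontrivial factorization. Recall that $1^n\in\mathtt{UPRIME}$ iff $n\ge 2$ and there is no pair of integers $a,q\ge 2$ with $aq=n$; inputs with $n\le 2$ are disposed of by hard-wired logic. So I would have the verifier ask P0 to transmit a \emph{compositeness certificate}, i.e.\ a string $S=1^{a_1}\#1^{a_2}\#\cdots\#1^{a_q}$ over $\{1,\#\}$, terminated by a distinguished end symbol; the intended meaning is that $q\ge 2$, all the $a_i$ equal some common value $a\ge 2$, and $\sum_i a_i=n$, so that $n=aq$ certifies compositeness. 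P1 stays silent throughout and the verifier ignores its communication slot.

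While reading $S$ the verifier moves its input head one square right for each $1$ it reads, so that $|S|_1$ is tracked against the tape. It performs several \emph{deterministic} sanity checks, each of which, when violated, causes an immediate halt with acceptance (``$1^n$ is prime''): the first block must contain at least two $1$'s (forcing $a\ge 2$); no block may be empty (no ``$\#\#$''); the running count of $1$'s must never exceed $n$; and when the end symbol arrives it must do so with $|S|_1=n$ and with at least one $\#$ already seen ($q\ge 2$). These constraints cap the length of any certificate not already rejected at about $2n$ symbols, so a dishonest P0 cannot stall the verifier with an over-long or never-ending stream. The substantive task is checking that all the $a_i$ are equal, which must be done online with $O(1)$ classical memory. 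Here I would combine the AW trick with a \emph{parity device} that avoids ever having to navigate to a particular block: at the start of each round the verifier flips a fair classical coin to pick a parity $\pi\in\{0,1\}$, and then runs AW-style length comparisons on the pairwise-disjoint consecutive pairs of blocks selected by $\pi$ --- $(1,2),(3,4),\dots$ for $\pi=0$, and (skipping block $1$) $(2,3),(4,5),\dots$ for $\pi=1$ --- applying $U_{\sqrt2\pi}$ to its single working qubit for each $1$ in the left block of a pair and $U_{-\sqrt2\pi}$ for each $1$ in the right block, measuring after each pair and reinitialising the qubit. If a ``not-equal'' outcome appears it halts and accepts; equal blocks yield the ``equal'' outcome with certainty, so a correct certificate is never rejected by this phase. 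Every consecutive pair has a well-defined parity and is examined under exactly one of the two choices, so any genuine mismatch $a_{j^\ast}\neq a_{j^\ast+1}$ is inspected with probability $\tfrac12$ and, when inspected, detected with probability $\Omega(1/n^2)$ by the usual argument ($(a_{j^\ast}-a_{j^\ast+1})\sqrt2\pi$ is a bounded nonzero integer multiple of $\sqrt2\pi$, incommensurable with $2\pi$). When a round's sweep finds no mismatch and the end-checks pass, the verifier runs the AW random-walk subroutine, tuned so that it says ``confirm'' with probability $\nu=\Theta(\epsilon/n^2)$ and ``retry'' otherwise: on ``confirm'' it halts with rejection (``$1^n$ is composite''), and on ``retry'' it restarts the round.

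For correctness, if $n$ is composite then P0 sends $S=(1^a\#)^{q-1}1^a$ with $a$ the least prime factor; all sanity checks pass, every comparison deterministically yields ``equal'', the verifier never accepts, and it halts with rejection as soon as the random-walk subroutine ``confirm''s --- which happens with probability $1$ within $O(1/\nu)=\mathrm{poly}(n)$ expected rounds. If $n$ is prime, then whatever P0 sends either trips a deterministic check (an immediate, correct acceptance) or is a well-formed-looking certificate in which, since $n=aq$ with $a,q\ge 2$ is impossible, the blocks cannot all be equal; the resulting mismatch is caught --- yielding a correct acceptance --- with probability at least $\tfrac12\cdot\Omega(1/n^2)$ per round, whereas the only way to an erroneous rejection is a ``confirm'' on a sweep that (with probability $\le\tfrac12$) missed the mismatch because of the parity choice, which occurs with probability $\le\nu$ per round. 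Taking $\nu$ small enough relative to $1/n^2$ (a constant number of extra fair-coin flips suffices, as $\epsilon$ is fixed) drives the overall error below $\epsilon$ while keeping the expected number of rounds $O(n^2)$; since each round costs $\mathrm{poly}(n)$ expected time, the whole debate runs in expected polynomial time, and a single qubit is enough, so the two qubits allowed by the statement are more than enough.

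I expect the main obstacle to be precisely this equal-length test: making a constant-space verifier check a global property of an unboundedly long, online certificate without paying exponentially in the number of restarts (which is what a naive ``flip a coin at each block boundary'' selection would cost, since it is exponentially biased toward early blocks) and without enlarging the state set. The parity trick --- checking all consecutive pairs in just two alternating passes --- together with the one-sidedness of the AW comparison is exactly what resolves this, and verifying the error and running-time bounds when these are combined with the restart mechanism is the part that needs care.
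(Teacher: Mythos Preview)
Your proposal is correct and shares the paper's core strategy: ignore P1, have P0 transmit a purported factorization of $n$ as a sequence of equal-length blocks whose total length is $n$, and use the AW rotation trick to catch unequal consecutive blocks, with a random-walk ``confirm'' step supplying the rejection probability. The implementations differ only in how the consecutive-pair comparisons are organized. The paper encodes the parity directly in the certificate alphabet---P0 sends alternating blocks $a^ib^ia^ib^i\cdots$ with no separators---and runs the two qubits \emph{in parallel} in every pass: the first qubit compares each $a$-block with the following $b$-block, and the second compares each $b$-block with the following $a$-block, so every consecutive pair is examined in every round without any auxiliary coin. Your single-qubit-plus-parity-coin scheme covers the same set of pairs across two rounds on average, and therefore actually uses fewer qubits than the theorem allots, at the price of halving the per-round detection probability; the paper's two-qubit version is marginally simpler to analyze (no conditioning on the parity coin) but spends the full qubit budget stated in the theorem.
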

\begin{proof}
The verifier V for this language is designed to ignore P1, since interaction with P0 is sufficient to establish whether the length $n$ of the input string $w$ is prime or not. P0 tries to convince V that $n$ is a composite number, i.e. a product of two integers $i,j>1$. If $n$ is indeed composite, then a string consisting of $j$ alternating blocks of the form $a^i$ and $b^i$ is a certificate of non-membership of $w$ in $\mathtt{UPRIME}$. For instance, if $w=1^{15}$, then the string $aaabbbaaabbbaaa$ is one such certificate. P0 is supposed to transmit this certificate again and again in an infinite loop. Of course, if $n$ is prime, then no such certificate exists, and all P0 can do is to send some other string and hope that V does not catch its lie.

Starting an infinite loop on the left end of the input, V moves its tape head one step to the right for each P0 symbol it reads. Making sure that no block has length one, it uses its first qubit to compare the length of each block of $a$'s in P0's transmission with the following block of $b$'s. Parallelly, the second qubit is used  to compare the length of each block of $b$'s in P0's speech with the following block of $a$'s. These comparisons are performed by applying the rotations $U_{\sqrt{2}\pi}$ and $U_{-\sqrt{2}\pi}$ on the symbols of the first and second blocks, respectively, as described in the above discussion of the AW trick. Whenever it obtains the ``not-equal" outcome on any comparison, V accepts the input, since P0 has been caught supplying a fake certificate of nonprimality. If the left-to-right scan of the input is completed without a conclusive outcome from any comparison, V calls a random walk subroutine that ends with rejection with a probability that is half of the probability that any fake certificate by P0 will cause an acceptance, and begins a new iteration and left-to-right scan otherwise. The expected runtime of V is polynomial in $n$, and the one-sided error bound (V can sometimes erroneously reject actual members of $\mathtt{UPRIME}$) can be improved by tuning the random walk subroutine, as in \cite{AW02}.
\end{proof}

\begin{theorem}\label{theorem:usquare}
The language $\mathtt{USQUARE}=\{1^{m^2}~\vert ~m>0\}$ has polynomial-time debates checkable 
by a 2qcfa with just three qubits.
\end{theorem}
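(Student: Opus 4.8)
The plan is to follow the template of the proof of Theorem~\ref{theorem:uprime}, replacing the factorization of a composite number by the identity $m^2 = \underbrace{m+\cdots+m}_{m\text{ terms}}$. As in that proof, V will ignore P0 and interact only with P1, since talking to P1 alone suffices to decide whether the input length $n$ is a perfect square: when $n=m^2$, the string consisting of $m$ alternating blocks $a^m, b^m, a^m,\ldots$ (for example $aaabbbaaa$ for $n=9$) is a certificate of membership, and P1 is required to transmit it over and over in an infinite loop, exactly as P0 did with its certificate in the proof of Theorem~\ref{theorem:uprime}.

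Over one period of P1's transmission, V performs a left-to-right scan, advancing its input head one cell per transmitted symbol, and must verify three things: (i)~all blocks of the period share a common length $k$; (ii)~the number $b$ of blocks in the period equals that common length $k$; and (iii)~the period consists of exactly $n$ symbols, so that $n=bk=k^2$. Condition~(iii) is obtained for free from the head tracking: when P1 signals end-of-period, V rejects unless its input head has just reached the right end-marker, which rules out both too-short and too-long periods. Condition~(i) is handled precisely as in Theorem~\ref{theorem:uprime} using two qubits: qubit~1 runs the AW comparison between each $a$-block and the following $b$-block, and qubit~2 runs it between each $b$-block and the following $a$-block; two qubits suffice because at every moment each of them is either accumulating $U_{\sqrt{2}\pi}$ rotations (as the first block of its current pair) or $U_{-\sqrt{2}\pi}$ rotations (as the second). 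Condition~(ii) is the new ingredient and the reason a \emph{third} qubit is needed: on qubit~3, V runs an AW-style comparison in which it applies a single $U_{\sqrt{2}\pi}$ rotation at the start of each block (hence $b$ rotations over the period) and a single $U_{-\sqrt{2}\pi}$ rotation for each symbol of the \emph{first} block (hence $k$ rotations), then measures qubit~3 at the end of the period; if $b=k$ the rotations cancel exactly and $q_0$ is observed with certainty, whereas if $b\neq k$ the net rotation is a nonzero integer multiple of $\sqrt{2}\pi$, and since $b,k\le n$ the analysis of \cite{AW02} gives the ``not-equal'' outcome $q_1$ with probability at least $\tfrac1{2n^2}$.

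Whenever any of these comparisons produces its ``not-equal'' outcome, V has caught P1 presenting a malformed certificate, and it rejects. If a whole period is scanned with no ``not-equal'' outcome, V invokes the AW random-walk subroutine, which --- independently of the input --- halts with \emph{acceptance} with probability $\Theta(1/n^2)$ and otherwise resets all three qubits to $q_0$ and restarts the scan from the left end-marker. If $n=m^2$, the honest P1 sends the valid certificate, no comparison ever fires, and V accepts with probability $1$ after an expected $O(n^2)$ rounds. If $n$ is not a perfect square, then for every string P1 might transmit, either its length differs from $n$ (rejected outright) or it has length $n$, in which case either two consecutive blocks differ or $b\neq k$ (because $bk=n$ cannot be a square); so each round ends in rejection with probability at least $\tfrac1{2n^2}$, and by tuning the random-walk subroutine so that its per-round acceptance probability is at most half of that quantity, V rejects a non-member with probability at least $\tfrac23$ (improvable to any value below $1$), with one-sided error, since V never rejects a member. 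The whole computation runs in expected polynomial time by the standard analysis of the AW trick, using only three qubits (the random-walk subroutine can reuse one of them for its coin flips).

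The step I expect to be the main obstacle is condition~(ii): noticing that one has to compare the \emph{number of blocks} against the \emph{block length} --- a comparison of two quantities, neither of which is a substring length in the usual formulation of the AW trick --- and checking that the trick still works when one of the two ``counters'' is advanced once per block rather than once per symbol. Once that comparison is available, the way the head tracking forces $n=bk$, together with the round/restart bookkeeping and the one-sided error analysis, follow the templates of Theorems~\ref{theorem:decidable} and~\ref{theorem:uprime}.
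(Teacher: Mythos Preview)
Your proposal is correct and follows essentially the same approach as the paper: P1 repeatedly sends a certificate of $m$ alternating blocks of length $m$, the first two qubits check pairwise block-length equality exactly as in the $\mathtt{UPRIME}$ proof, and the third qubit compares the number of blocks with the length of the first block via the AW rotation trick. Your write-up is more detailed (explicit head-tracking for period length, explicit error-direction analysis), but the construction and the use of the three qubits are identical to the paper's.
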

\begin{proof}
The verifier V for this language is similar to the one for $\mathtt{UPRIME}$ described in the proof of Theorem \ref{theorem:uprime}, with the following differences. In this case, it is P1 who is doing all the talking, sending a certificate of ``squareness" to V over and over again. For an input string of the form $1^{m^{2}}$, such a certificate is the concatenation of exactly  $m$ alternating blocks of the form $a^m$ and $b^m$. V uses its first two qubits to make sure that all blocks are of the same length, exactly as in the previous proof. The third qubit is used to compare the number of blocks with the length of the first block, by applying a $U_{\sqrt{2}\pi}$ for each symbol of the first block, and a $U_{-\sqrt{2}\pi}$ for each block encountered in the certificate.
\end{proof}

It is easy to see how the technique used in the proof of Theorem \ref{theorem:usquare} can be generalized to handle languages involving powers greater than two, using additional qubits. Note that both $ \tt UPRIME $ and $ \tt USQUARE $ are nonstochastic \cite{Tur81}.\footnote{A language recognized by a one-way probabilistic finite automaton (pfa) with cutpoint $ \frac{1}{2} $ is called stochastic \cite{Rab63}. It is known that \cite{Kan89,Kan91,YS09C,YS11A} two-way pfa's and one-way quantum finite automata (qfa's) cannot recognize any nonstochastic language  with cutpoint $ \frac{1}{2} $. For two-way qfa's, we only know that \cite{YS09C,FYS10A,YS11A} they can recognize some non-unary nonstochastic language with cutpoint $ \frac{1}{2} $ if the head is allowed to be quantum -- a generalization of 2qcfa's.}

\begin{theorem}\label{theorem:upower}
The language $\mathtt{UPOWER}=\{1^{2^m}~\vert ~m>0\}$ has polynomial-time debates checkable 
by a 2qcfa with just two qubits.
\end{theorem}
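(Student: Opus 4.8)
The plan is to mimic the verifiers of Theorems~\ref{theorem:uprime} and~\ref{theorem:usquare}, with P1 doing all the talking while V ignores P0. P1's job is to convince V that $n$, the length of $w=1^n$, is a power of $2$, and V will ask it to do so by transmitting, over and over in an infinite loop with successive copies separated by $\dollar$, a descending \emph{halving chain}: an alternating sequence of blocks $a^{d_1}b^{d_2}a^{d_3}b^{d_4}\cdots$ with $d_1=n$ and $d_{i+1}=d_i/2$ for every $i$, continued until a block of length $1$ is reached, at which point that copy ends. Such a string exists if and only if $n$ is a power of $2$ --- when it does, it has $m+1$ blocks, of lengths $n,n/2,\dots,2,1$ --- and this is the property V will exploit, since every way P1 could fake such a chain for a non-power of $2$ is something V can detect.

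For each copy, V will run three kinds of checks. The first uses only the input head: V scans $B_1$ while moving the head one cell rightward per symbol and declares $|B_1|=n$ only if $B_1$ ends exactly when the head reaches $\dollar$, deciding against P1 otherwise; the same head bouncing is reused block by block to ensure no later block has length exceeding $n$, which is what prevents P1 from freezing V forever by transmitting an infinite ``block''. The second is the AW trick, applied just as in the two earlier proofs with the two qubits interleaved --- qubit one handling the adjacent pairs $(B_1,B_2),(B_3,B_4),\dots$ and qubit two handling $(B_2,B_3),(B_4,B_5),\dots$ --- except that the goal is now to detect $d_i\neq 2d_{i+1}$ rather than plain inequality: V applies $U_{\sqrt2\pi}$ for each symbol of the first block of a pair and $U_{-2\sqrt2\pi}$ for each symbol of the second, so the net rotation over the pair is $\sqrt2\pi(d_i-2d_{i+1})$ and the ``not-equal'' outcome is possible only when $d_i\neq 2d_{i+1}$, with probability $\Omega(1/n^2)$ in that case. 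The third check, performed when the copy-closing $\dollar$ arrives, simply confirms from V's classical state that the block just finished had length $1$. Note that, in contrast with the $\mathtt{USQUARE}$ verifier, no third qubit for a block counter is needed here, because the exponent is already encoded implicitly in the doubling structure of the chain.

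If $n$ is a power of $2$, a faithful P1 sends the genuine halving chain: every head check passes exactly, every rotation is by $0$, the final block really has length $1$, and V catches nothing. If $n$ is not a power of $2$, then whatever P1 transmits, each attempted copy must contain a flaw --- its first block is not of length $n$ (caught deterministically), or some adjacent pair violates the halving relation (caught with probability $\Omega(1/n^2)$ by the responsible qubit), or the chain fails to terminate with a block of length $1$ (caught deterministically when the copy closes) --- because a flawless copy with $k$ blocks would force $n=2^{k-1}$. So, exactly as in Theorems~\ref{theorem:uprime} and~\ref{theorem:usquare}, I would arrange V so that whenever it finishes a copy without having caught P1 it runs the AW random-walk subroutine, deciding in P1's favor (accepting) with a small probability $\rho_n=\Theta(1/n^2)$ and starting a fresh copy otherwise. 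The behavior is then one-sided: for $w\in\mathtt{UPOWER}$ a faithful P1 is accepted with certainty after polynomially many copies in expectation, whereas for $w\notin\mathtt{UPOWER}$ the per-copy probability of an erroneous acceptance, at most $\rho_n$, is dominated by the per-copy probability $\Omega(1/n^2)$ of a correct rejection, so the error can be pushed below any prescribed $\epsilon$ by shrinking $\rho_n$ while keeping the expected running time polynomial in $n$. The step I expect to take the most care is checking that this list is genuinely exhaustive --- that every departure from the honest certificate, including malformed transmissions and attempts to stall V indefinitely, is caught with the claimed probability --- and this is precisely where the per-block head-bouncing bookkeeping earns its keep.
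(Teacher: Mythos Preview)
Your proof is correct and follows essentially the same strategy as the paper's: P1 alone transmits a geometric-sequence certificate, two qubits alternate over consecutive block pairs to verify the factor-of-two relation via the AW rotations, and the random-walk subroutine supplies the competing acceptance probability. The only cosmetic difference is direction: the paper sends an \emph{ascending} chain of lengths $1,2,4,\dots,2^{m-1}$ (total length $2^m-1$), so the anchor to $n$ is the single check that the certificate is exactly one symbol shorter than the input, whereas your descending chain anchors by matching $|B_1|=n$ and then bounding later blocks with head-bouncing; both work, but the paper's length check is slightly tidier.
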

\begin{proof}
Once again, only  P1 talks, repeating a purported certificate of membership forever. A certificate for $1^{2^m}$ is a string of length $2^m-1$, which is the concatenation of   $m$ alternating blocks of  $a$'s and $b$'s, where the $i$'th block has length $2^{i-1}$. For each transmission of the certificate, V checks whether the input is one symbol longer than the certificate, and the first block has length 1. It also uses its qubits to check that each block in the certificate is twice as long as the previous one, by applying a $U_{2\sqrt{2}\pi}$ for each symbol of the first block, and a $U_{-\sqrt{2}\pi}$ for each symbol of the second block in each comparison. The rest of the proof is similar to the previous examples.
\end{proof}

\begin{theorem}\label{theorem:ufib}
The language $\mathtt{UFIB}=\{1^{n}~\vert ~n\mbox{ is a Fibonacci number}\}$ has polynomial-time debates checkable 
by a 2qcfa with just three qubits.
\end{theorem}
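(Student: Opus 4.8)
The plan is to follow the template of the proofs of Theorems~\ref{theorem:usquare} and \ref{theorem:upower}: only P1 speaks, repeating a purported certificate of membership in an endless loop, and V checks it with the AW trick. For an input $1^n$ with $n=F_k$ (the $k$th Fibonacci number, under the indexing $F_1=F_2=1$), the certificate is the alternating string $a^{F_1}b^{F_2}a^{F_3}b^{F_4}\cdots$ whose $i$th block $B_i$ has length $F_i$, followed by an end-of-certificate marker right after the block $B_k$ of length $F_k$. To accept, V must confirm: (i) $|B_1|=|B_2|=1$, the base case, checked deterministically since it only requires counting to a constant; (ii) $|B_{i+2}|=|B_i|+|B_{i+1}|$ for every applicable $i$, the Fibonacci recurrence; and (iii) the last block has length exactly $n$.

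Checks (ii) and (iii) are where the three qubits go. As in the earlier proofs, $|B_i|+|B_{i+1}|$ is compared with $|B_{i+2}|$ by applying $U_{\sqrt{2}\pi}$ for each symbol of $B_i$ and of $B_{i+1}$ and $U_{-\sqrt{2}\pi}$ for each symbol of $B_{i+2}$: the net rotation is $U_{\sqrt{2}\pi(|B_i|+|B_{i+1}|-|B_{i+2}|)}$, which returns the qubit to $\ket{q_0}$ exactly when the recurrence holds on that triple and otherwise produces the ``not-equal'' outcome with probability at least $\frac{1}{2n^2}$, at which point V rejects (P1 has been caught lying). The triples to verify, $(1,2,3),(2,3,4),(3,4,5),\dots$, overlap, but triples whose starting indices agree modulo $3$ are pairwise disjoint; so one qubit handles an entire residue class, processing its triples one after another and measuring and resetting between them, and since any single block is the first, second, and third member of exactly one triple each --- three triples whose starting indices are three consecutive integers --- the three qubits, run round-robin, suffice with no collision. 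Crucially, check (iii) needs no fourth qubit: the $U_{\sqrt{2}\pi}$ rotations a qubit performs while reading the \emph{first} block of one of its triples contribute exactly $U_{\sqrt{2}\pi|B_j|}$, so when P1 emits the end-of-certificate marker after $B_k$, V reinterprets the state of the qubit that just began a fresh triple at $B_k$ as an encoding of $|B_k|$, sweeps the input head from endmarker to endmarker applying $U_{-\sqrt{2}\pi}$ for each of the $n$ input cells, and measures, seeing ``not-equal'' with probability at least $\frac{1}{2n^2}$ precisely when $|B_k|\neq n$. (Equivalently one could use the reversed certificate $a^{n}b^{F_{k-1}}a^{F_{k-2}}\cdots$ and verify $|B_1|=n$ deterministically by walking the input head across the first block.)

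The remaining apparatus is that of the AW trick, reused from the proofs of Theorems~\ref{theorem:uprime}--\ref{theorem:upower}: V moves its input head back and forth while listening so that every round terminates (foiling a P1 that sends an unending block), and whenever a scan finishes with no conclusive comparison V calls the random-walk subroutine of \cite{AW02,SY14A}, which halts the round by accepting with a probability tuned to dominate any probability with which a cheating P1 could force a wrong acceptance, and restarts otherwise. When $n$ is a Fibonacci number the honest P1 passes all of (i)--(iii), so V accepts with probability approaching $1$; when $n$ is not, no certificate passes all three checks, so every round in which V fails to catch a defect is outweighed by a larger chance of catching one, and V rejects with one-sided error that the subroutine drives below any prescribed $\epsilon$, in polynomial expected time. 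The one point needing real care is the accounting in the previous paragraph --- that a three-term recurrence together with an input-length test fits into three ``qubit tracks'' rather than four --- which rests on the mod-$3$ disjointness of consecutive triples and on recycling each track's leading rotation as the length comparison.
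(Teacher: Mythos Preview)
Your proposal is correct and follows essentially the same approach as the paper: P1 alone transmits a unary encoding of the Fibonacci sequence, three qubits verify the overlapping three-term recurrences in round-robin fashion by the mod-$3$ disjointness of consecutive triples, the input head bounds block lengths to foil an infinite transmission, and the AW random-walk subroutine closes each pass. The only noteworthy difference is where the input-length comparison is anchored: the paper's certificate stops at $F_{i-1}$ and places a special marker $!$ before $F_{i-2}$ so that the qubit just freed there checks $n=F_{i-2}+F_{i-1}$, whereas you run the certificate out to $B_k$ and repurpose the qubit that has just begun a fresh triple at $B_k$ (and hence carries exactly $|B_k|\cdot\sqrt{2}\pi$ of rotation) to test $|B_k|=n$ by a sweep of the input. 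Both tricks recycle one of the three tracks for the length test without a fourth qubit, so the two arguments are interchangeable variants of the same construction.
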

\begin{proof}
The certificate of membership for the $i$'th Fibonacci number $F_i$ is a string describing the Fibonacci sequence up to $F_{i-1}$, in the format \[F_1\#F_2\#\cdots\#F_{i-3}!F_{i-2}\#F_{i-1},\] where the numbers in the sequence are given in unary, and a different separator symbol ($!$) is put before $F_{i-2}$. P1 transmits the purported certificate repeatedly. The verifier V starts by checking the first two numbers in the certificate. For all $j\in\{1,2,3\}$, and all $k\geq 3$ such that $k\equiv j-1$ (mod 3), V uses its $j$'th qubit to verify the correctness of the $k$'th member of the sequence in P1's transmission by checking whether its length equals the sum of the lengths of the two members preceding it.  V moves its  head along the input tape to make sure that the member of the certificate sequence that it is presently scanning is not longer than the input, thereby foiling any attempt to make it run forever. Whenever it sees the symbol $!$, V starts comparing the length of the input with the sum of the lengths of the two subsequent members of the certificate, using the qubit available for that comparison, and rejecting whenever it detects a mismatch. The rest of the proof is similar to the previous examples.
\end{proof}

\section{Concluding remarks}\label{section:conc}


It is well known that finite automata with $k$ classical input heads can use them as one can use logarithmic space; for instance, to count up to $O(n^k)$. One can therefore extend the argument of Theorem \ref{theorem:zeroerrorquantum} to $\mathsf{APSPACE}$ (the class of languages recognized by ATMs using polynomial space), which equals $\mathsf{EXPTIME}$ \cite{CKS81}, concluding that every language in the class $\mathsf{EXPTIME}$ has a zero-error (public-coin) debate checkable by  a multiple-head 2qcfa with four quantum states

Debate systems with deterministic logarithmic-space (or equivalently, multi-head finite-state) verifiers which have the additional property that P0 can hide some of its messages to the verifier from P1 are known to correspond to the class $\mathsf{EXPTIME}$. If one upgrades the verifier in this model to a probabilistic version, but demands that it should still make zero error, the computational power does not change, since zero-error probabilistic machines can be derandomized easily. We can therefore also state that  every language in the class $\mathsf{EXPTIME}$ has such a ``partial-information" debate checkable by  a private-coin multiple-head two-way probabilistic finite automaton with zero error.

All the examples in Section \ref{section:poly} have only one debater doing all the talking. Can one find a polynomial-time debatable language $\mathtt{L}$ where neither $\mathtt{L}$ nor its complement already have a polynomial-time qAM proof system?

\section*{Acknowledgements}
We thank the anonymous reviewers of a previous version of this manuscript for their helpful comments. 

\bibliographystyle{plain}
\bibliography{tcs}

\end{document}